\newtheorem{theorem}{Theorem}
\newtheorem{lemma}{Lemma}
\newtheorem{definition}{Definition}
\begin{document}

\title{Optimal Deadline Scheduling with Commitment}

\author{Shiyao Chen$^\dag$\thanks{$^\dag$School of ECE, Cornell
University, Ithaca NY 14853.} \quad Lang Tong$^\dag$
\quad Ting He$^\ddag$
\thanks{$^\ddag$IBM T.J. Watson Research Center,
Hawthorne, NY 10532. \hfill \break This research was sponsored in part by the
U.S. National Institute of Standards and Technology under Agreement Number 60NANB10D003
and the National Science Foundation under a Grant CNS-1135844.}
}

\maketitle

\begin{abstract}
We consider an online preemptive scheduling problem where jobs with deadlines arrive sporadically. A commitment
requirement is imposed such that the scheduler has to either accept or decline a job immediately upon arrival. The scheduler's decision to accept an arriving job constitutes a contract with the customer; if the accepted job is not completed by its deadline as promised, the scheduler loses the value of the
corresponding job and has to pay an additional penalty depending on
the amount of unfinished workload. The objective of the online scheduler is to maximize the overall profit, \textit{i.e.}, the total value of the admitted jobs completed before their deadlines less the penalty paid for
the admitted jobs that miss their deadlines.  We show that the maximum competitive ratio is $3-2\sqrt{2}$
and propose a simple online algorithm to achieve this competitive ratio.  The optimal scheduling includes
a threshold admission and a greedy scheduling policies.  The proposed algorithm has direct applications to
the charging of plug-in hybrid electrical vehicles (PHEV) at garages or parking lots.
\end{abstract}

\begin{IEEEkeywords}
Deadline scheduling; competitive ratio analysis; commitment requirement; PHEV charging scheduling.
\end{IEEEkeywords}

\section{Introduction}

In a conventional setting of deadline scheduling, jobs arrive sporadically, each with prescribed processing time, deadline, and value. Upon arrival, jobs are queued until their respective deadlines, during which time an online scheduler can schedule any pending jobs in the system. In general, there is no guarantee that a submitted job will be completed by its deadline. In fact, a customer who submits a job does not know whether the job will be completed until after the deadline. For example, an online scheduler may accept a job into the system but later choose to work on another more profitable job instead.

In this paper, we consider a variation of the deadline scheduling problem by imposing a commitment requirement at the arrival of a job. In particular, if a job is accepted and successfully completed, the scheduler receives a certain reward. If the scheduler is unable to complete an accepted job, it pays a penalty. The scheduler receives neither reward nor penalty if it declines a job upon arrival.

The deadline scheduling problem considered in this paper has a direct application in scheduling the charging of plug-in hybrid electric vehicles (PHEV) in parking lots or garages. In this case,
each car arrives with a certain charge level, and the customer has some idea about how long the car can be left at the facility (\textit{e.g.}, approximately the duration in which the customer will be shopping before picking up the car).  Upon submitting a request, the customer is either turned away or told the car will be charged at a certain price. And the customer is informed that if the car is not charged to the requested level by the deadline, a compensation will be made (\textit{e.g.}, with a voucher for future charges).

While the requirement of immediate commitment reduces the number of unsatisfactory customers
and the amount of penalty, it brings nontrivial complications to the deadline scheduling problem. The difficulty comes from the fact that optimal decision on whether to turn away a customer seems to depend on the kind of jobs to arrive in the future. Had the scheduler known that there is a highly profitable job to arrive, it would have declined some of the less profitable ones. Our goal is to maximize the profit by optimally trading off accepting more customers against avoiding excessive penalties due to unfinished jobs. To accommodate general arrivals and workload, we aim at optimizing the \emph{competitive ratio} that characterizes the worst-case profit relative to that of an optimal offline scheduling algorithm, for which we establish the optimal competitive ratio and give an online scheduling algorithm to achieve the optimum competitive ratio.

\subsection{Related Work}

Without the commitment requirement, there is a considerable literature on the deadline scheduling problem, starting from the seminal work of Liu and Layland \cite{Liu&Layland:JACM73}. The problem is often divided into the underloaded and overloaded regimes. The former corresponds to the case when there exists an offline scheduling algorithm that can complete all jobs arrived whereas the latter corresponds to the case when some jobs cannot be completed even for the best offline scheduling algorithm. For the underloaded scenarios, it has been
shown that simple online scheduling algorithms such as earliest deadline first (EDF)
\cite{Liu&Layland:JACM73,Dertouzos:IFIP74} and least laxity first (LLF) \cite{Mok:83} achieve the same performance as the optimal
offline scheduling algorithm.
The assumption of underloaded overall workload, however, is restrictive
and unverifiable in practice. Locke showed in \cite{Locke:86} that
both EDF and LLF can perform poorly in the presence of overload.
There were efforts to develop an online scheduling algorithm with
performance guarantee in terms of competitive ratio (see Definition \ref{def:cr}), even
when the system is overloaded. Online scheduling algorithms with competitive ratio $1/4$
were proposed in \cite{Baruah:RTS92,Koren&Shasha:SIAMJC95} and $1/4$ was proved to be optimal competitive-ratio-wise for the deadline scheduling problem without commitment.

One of the first work that proposes the idea of commitment is \cite{BarNoy:03} (commitment is termed as immediate notification), in which Bar-Noy \textit{et. al.} considered the application of video on demand where customers submit movie request and the scheduler manages to either accept or decline the request within a specific ``notification time" after the request releases. Bar-Noy \textit{et. al.} studied the competitive ratio when the ``notification time" varies from zero (immediate notification) to proportional to the length of the movie requested.

Later, scheduling with immediate notification and immediate decision has been studied in \cite{Goldwasser:JS03} (single processor, immediate notification), \cite{Ding:AAIM06} (multiple processors, immediate notification), and \cite{Ding:ESA07,Ebenlendr:WAOA08} (multiple processors, immediate decision).
Immediate decision requires that, in addition to providing to the customers an immediate feedback regarding admission or declination, the scheduler also has to provide to the customer upon job release the specific scheduled time of the job, if accepted.
The proportional value model was considered in \cite{Goldwasser:JS03}, while \cite{Ding:ESA07,Ding:AAIM06} considered the unit length jobs with unit value.
An online scheduling algorithm with immediate decision is proposed in
\cite{Ding:ESA07} with asymptotic competitive ratio $(e-1)/e$, while the authors of \cite{Ebenlendr:WAOA08} showed $(e-1)/e$ to be an asymptotic upper bound of any online algorithms.
However, the authors of \cite{Goldwasser:JS03,Ding:AAIM06,Ding:ESA07,Ebenlendr:WAOA08} dealt with non-preemptive scheduling with no non-completion penalty involved.

With the commitment requirement online preemptive scheduling with
deadlines becomes much more challenging in the presence of overload.
The authors of \cite{Thibault&Laforest09} and the author of \cite{Fung:AICCC10} gave separately two preemptive scheduling algorithms for multiple processors with immediate notification and non-completion
penalty with the proportional
value model ($v_i=p_i$, see Section \ref{sec:formulation}). In \cite{Thibault&Laforest09,Fung:AICCC10} the non-completion
penalty associated with a job with value $v_i=p_i$ is set as $\rho v_i$ with the penalty parameter $\rho\geq0$.
The competitive ratio results given in \cite{Thibault&Laforest09,Fung:AICCC10}
are $(\min_{a>1+\rho}(2a+3)(1+\frac{\rho}{a-\rho}+\frac{1}{a-\rho-1}))^{-1}$ and $(2\rho+3+2\sqrt{\rho^2+3\rho+2})^{-1}$, respectively.
Even for the situation with no non-completion penalty $(\rho=0)$,
the competitive ratio results given in \cite{Thibault&Laforest09,Fung:AICCC10}
reduces to $(\min_{a>1}(2a+3)(1+\frac{1}{a-1}))^{-1}=(7+2\sqrt{10})^{-1}$ and $(3+2\sqrt{2})^{-1}$, respectively, which is at most as good as our result $3-2\sqrt{2}$ in this paper for single processor with non-completed portion penalized. On the other hand, there are no arguments in \cite{Thibault&Laforest09,Fung:AICCC10} establishing upper bounds of competitive ratio ever achievable to quantify how far the proposed algorithms are away from optimality.

There is a series of work by Hou, Borkar and Kumar \cite{Hou&Borkar&Kumar:INFOCOM09,Hou&Kumar:MobiHoc09,Hou&Kumar:INFOCOM10} and Jaramillo, Srikant and Ying \cite{Jaramillo&etal:JSAC11,Jaramillo&Srikant:ToN11} dealing with the deadline scheduling problem with a different setup from that adopted in this paper.
Specifically, the channel (the counterpart in \cite{Hou&Borkar&Kumar:INFOCOM09,Hou&Kumar:MobiHoc09,Hou&Kumar:INFOCOM10,Jaramillo&etal:JSAC11,Jaramillo&Srikant:ToN11} of the processor in this paper) is modeled as a stationary, irreducible Markov process with a finite state space (unreliable channel model), whereas the processor is always dedicated to scheduling the jobs in this paper.
Due to the unreliable channel model, the packet transmission (the counterpart of the job in this paper) may take a random amount of time to go through, whereas the job length in this paper is deterministic upon arrival. Each client (transmitter) specifies a delay requirement (transmissions which take longer than the delay requirement is invalid), which corresponds to the deadlines in this paper. The packet arrival process is assumed to be a stationary, irreducible Markov process with finite state space for each client, whereas the job arrival process can be arbitrary and quite bursty in this paper.
Thus, the stochastic model of the processor (channel), the job (packet) arrival process and the job length (packet transmission duration) is available in \cite{Hou&Borkar&Kumar:INFOCOM09,Hou&Kumar:MobiHoc09,Hou&Kumar:INFOCOM10,Jaramillo&etal:JSAC11,Jaramillo&Srikant:ToN11}. On the other hand, the job arrival as well as the job length can be arbitrary for the future job released in this paper. There is also difference in the metric used; the feasibility optimality is studied in \cite{Hou&Borkar&Kumar:INFOCOM09,Hou&Kumar:MobiHoc09,Hou&Kumar:INFOCOM10,Jaramillo&etal:JSAC11,Jaramillo&Srikant:ToN11}, \textit{i.e.}, the overall packet arrival is assumed to be underloaded, whereas the overloaded scenario is treated in this paper with the metric competitive ratio.

The problem of PHEV charging scheduling in public garage has been considered in \cite{Kulshrestha&etal:PES09,Tulpule&etal:11,Su&etal:PES11}. An energy economic analysis of PHEV charging using solar photovoltaic panels at workplace parking garage is conducted in \cite{Tulpule&etal:11} with the conclusion that PHEV charging facility in public garage is beneficial to both the car owners as well as the facility operator.
The authors of \cite{Kulshrestha&etal:PES09} aggregated a system architecture model, an operation model and a PHEV battery model to simulate PHEV charging in a municipal parking lot. The method of particle swarm optimization is employed to allocate energy to PHEVs in
\cite{Su&etal:PES11}. The performance of the scheduling policies proposed in \cite{Kulshrestha&etal:PES09,Tulpule&etal:11,Su&etal:PES11} are validated via simulation results. This paper adopts a deadline scheduling framework with non-completion penalty that suits well for PHEV charging application and proposes an online scheduling algorithm with worst case performance guarantee.

\subsection{Summary of Results}
In this paper, we impose a penalty on unfinished workload and obtain results on the optimal competitive ratio for the online preemptive deadline scheduling with commitment. We propose an online scheduling algorithm DSC (acronym for Deadline Scheduling with Commitment) with
competitive ratio $3-2\sqrt{2}=17.16\%$. We also provide a converse via an adversary argument and show that no online scheduling algorithm exists with a better competitive ratio, thus further establishing the optimality of DSC competitive-ratio-wise. Comparing with the optimal competitive ratio $1/4=25\%$ without the commitment requirement in \cite{Baruah:RTS92,Koren&Shasha:SIAMJC95}, we observe a performance loss of $7.84\%$ competitive-ratio-wise with the additional commitment obligation.

\section{Problem Formulation}\label{sec:formulation}

A job $T=(r,p,d,v)$ is represented by a quadruple specified by the release time $r$, processing time $p$, deadline $d$, and value $v$.  We assume the so-called proportional value model \cite{Goldwasser:JS03} where the value $v$ of a job is proportional to its processing time $p$, or without loss of generality, $v=p$. A job $T$ is called \textit{tight} if $r+p=d$, which implies that the scheduler must either work on the job or decline it immediately. Preemption is allowed at no cost in scheduling (\textit{i.e.}, a preempted job
can be resumed from the point of preemption at a later time).
In our scheduling problem an input instance $I$ includes $n$ jobs $T_1,\ldots,T_n$ to be scheduled on a single processor, where the integer
$n$ is the total number of jobs released for instance $I$ (the total number of jobs can differ over input instances).
In general, we are interested in a collection of instances in the input instance set $\mathcal{I}$.

Use $\mathcal{S}_{\mbox{\tiny online}}$ to denote online scheduler and $\mathcal{S}_{\mbox{\tiny offline}}$ the offline scheduler.
An online scheduler $\mathcal{S}_{\mbox{\tiny online}}$ knows the parameters of job $T_i$ only at the release time $r_i$. Deadlines are firm, \textit{i.e.}, completing a job after its deadline yields zero value.
The scheduling is
done with commitment, \textit{i.e.}, upon the release of each job,
the scheduler has to decide whether to accept or decline the job
request.
Each accepted job incurs a non-completion penalty equal to the unfinished workload (shortage) if it is not completed by its deadline.
The profit obtained by the
scheduler is the total value of all completed jobs, minus all penalties paid. This specific non-completion penalty suits the application of PHEV charging well since the utility is delivered to the car owner continuously as the battery charging level increases, unlike some computing jobs in high performance computing grids for which the utility can be obtained only upon the completion of all the computation steps.

Given an instance $I$, we denote by $S_{\mbox{\tiny online}}(I)$ ($S_{\mbox{\tiny offline}}(I)$) as the total profit (or value) obtained by the scheduler $\mathcal{S}_{\mbox{\tiny online}}$ ($\mathcal{S}_{\mbox{\tiny offline}}$).
Our objective is to make the online scheduler competitive across all instances in $\mathcal{I}$.

In contrast to the online scheduler, an offline scheduler $\mathcal{S}_{\mbox{\tiny offline}}$ is
clairvoyant and knows the entire input instance a priori. Due
to the prior knowledge of the job parameters, the offline scheduler
is able to make commitment decisions.
We denote by $\mathcal{S}^\ast_{\mbox{\tiny offline}}$ the optimal offline scheduler.

The problem is to design an online scheduling algorithm with worst-case performance
guarantee (relative to the optimal offline scheduling algorithm) even in the presence of overload. The performance
guarantee is given in terms of competitive ratio defined below.

\begin{definition}\label{def:cr}
\textbf{Competitive ratio:} An online algorithm $\mathcal{S}_{\mbox{\tiny online}}$ is
$\alpha$-competitive for an input instance set $\mathcal{I}$ if
$\min_{I\in\mathcal{I}}\frac{S_{\mbox{\tiny online}}(I)}{S^\ast_{\mbox{\tiny offline}}(I)}\geq\alpha$
where $I$ varies over all possible input instances in $\mathcal{I}$.
\end{definition}

That is, an $\alpha$-competitive online algorithm is guaranteed to
achieve at least $\alpha$ fraction of the optimal offline value
under any input instance $I$ in the input instance set $\mathcal{I}$. For the rest of the paper
the input instance set $\mathcal{I}$ is fixed to be the set of all input instances $I$ such that $I$ contains finite number of jobs and each job satisfies $d_i\geq r_i+p_i$ (otherwise, neither the online nor the offline scheduler is able to complete the job by its deadline and the job can thus be deleted from the input instance $I$). Note that both underloaded and overloaded input instances are included in the input instance set $\mathcal{I}$ defined above.

\section{Optimal Deadline Scheduling with Commitment}

Compared with the traditional deadline scheduling without the commitment requirement,
the additional difficulty imposed by the commitment obligation to the online scheduler depends on the overall workload of the jobs released: when the overall workload is underloaded, simple scheduling algorithms such as EDF and LLF achieve competitive ratio 1 by simply admitting all jobs released; the restriction to the underloaded case precludes the need of admission control.
In this section, we describe an online scheduling algorithm DSC dealing with the overloaded scenario and establish its optimality in competitive ratio. Specifically, we show $3-2\sqrt{2}$ as the optimal worst-case performance of online algorithms relative to the (optimal) offline counterpart. We summarize these results in the following theorem followed by a detailed description of DSC.
\begin{theorem}\label{thm:1}
For the input instance set $\mathcal{I}$ specified in Section \ref{sec:formulation},
\begin{enumerate}
\item The competitive ratio $3-2\sqrt{2}$ is achievable by DSC.
\item $3-2\sqrt{2}$ upper bounds the competitive ratio ever achievable by any online scheduling algorithms.
\end{enumerate}
\end{theorem}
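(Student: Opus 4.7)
The plan is to prove the two parts separately but with matched parameters, so that the algorithmic bound in part~(1) and the adversarial bound in part~(2) meet at $3-2\sqrt{2}=(\sqrt{2}-1)^2$.

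For part~(1), I would fully specify DSC as a threshold admission rule with parameter $\beta>1$ together with a greedy (earliest-deadline-first among committed jobs, with preemption allowed) scheduling rule. The threshold rule would admit a newly released $T_i$ only if its processing time exceeds $\beta$ times the total workload of currently committed jobs with deadlines no later than $d_i$; otherwise the job is declined. The analysis partitions the timeline into DSC busy intervals, and within each interval upper bounds the offline profit by splitting the offline-collected value into three classes: jobs DSC also completed, jobs DSC admitted but only partially completed, and jobs DSC declined. Class~(i) contributes equally to both schedulers. For class~(ii), the value-minus-shortage already appears in $S_{\mbox{\tiny online}}$, so only the shortage has to be charged. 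For class~(iii) the threshold rule forces each declined job's value to be at most $1/\beta$ of some already-committed workload, charging the offline value of declined jobs against the work performed in classes~(i) and~(ii). Summing these inequalities yields an estimate of the form $S^\ast_{\mbox{\tiny offline}}\le f(\beta)\,S_{\mbox{\tiny online}}$ for an explicit $f$, and minimizing $f$ over $\beta>1$ gives $\beta=\sqrt{2}+1$ with $1/f(\beta)=3-2\sqrt{2}$.

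For part~(2), I would build an adversarial family of instances using only tight jobs, so that commitment decisions are forced and no deferred scheduling is possible. The instances consist of geometrically escalating jobs of processing times $1,a,a^2,\ldots,a^{k-1}$, each released just after the preceding job begins so that consecutive jobs overlap. At every stage the online algorithm must choose between declining the new job (locking in the smaller accumulated value while letting the adversary halt) or accepting it (paying either the preemption penalty on the already-committed smaller job or the shortage on the new one). The adversary then stops at whichever stage maximally damages the online algorithm, while the offline scheduler, knowing the stopping stage in advance, simply completes the single largest job. Writing the best competitive ratio attainable against this family as a function of $a$ and optimizing over $a>1$ yields $a=\sqrt{2}+1$ and the upper bound $3-2\sqrt{2}$, matching the algorithmic bound.

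The main obstacle is the bookkeeping in part~(1): the non-completion penalty effectively charges DSC twice for each admitted-but-unfinished job (lost value plus shortage), so the charging scheme must control the \emph{eventual} shortage across an entire busy interval rather than merely the instantaneous congestion at the release instant, even when a chain of admitted jobs repeatedly preempts earlier ones. Once this charging argument is set up cleanly, identifying the algorithm parameter $\beta$ with the adversary parameter $a$ is precisely what makes the two bounds coincide at $3-2\sqrt{2}$.
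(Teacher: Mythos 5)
Your part (1) follows the same general route as the paper's achievability proof (busy-interval decomposition; splitting the offline value into jobs the online algorithm completed, admitted but only partially completed, and declined; charging declined value against committed work; optimizing a threshold at $\beta=1+\sqrt{2}$). Two details matter, though. First, the admission test must apply only to jobs that \emph{cannot} be feasibly appended to the current tentative schedule: as literally stated, your rule declines a small job even when it fits with no conflict at all, and a stream of such freely schedulable jobs drives the ratio to $0$. Second, the quantity compared against the new job's value in the paper is the \emph{profit} (value minus already-accrued shortage) of the jobs actually displaced by tight-scheduling the new job in $[d-p,d]$, not the raw committed workload with earlier deadlines; the penalty term is what lets the capacity and net-gain lemmas close.

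The genuine gap is in part (2). A geometrically escalating sequence $1,a,a^2,\ldots$ cannot force the bound $3-2\sqrt{2}$. The adversary must make \emph{every} stopping option equally bad, and the online player's locked-in profit after accepting $T_0,\ldots,T_i$ and declining $T_{i+1}$ is $x_i-\sum_{j<i}x_j$ (the last accepted job's value minus the shortage penalties of all earlier accepted, displaced jobs). Equalizing $\frac{x_i-\sum_{j<i}x_j}{x_{i+1}}=\frac{1}{c}$ over all $i$ forces the linear recursion $x_{n+2}-(c+1)x_{n+1}+2cx_n=0$, whose solution is \emph{not} geometric; its characteristic roots are complex precisely when $c<3+2\sqrt{2}$, and that is what makes the sequence eventually stop growing so that the ``accept everything'' branch is also bad and the construction terminates. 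With $x_i=a^i$ the binding option is the very first decline, which hands the online player $x_0/x_1=1/a$; at your claimed optimum $a=1+\sqrt{2}$ this is $\sqrt{2}-1\approx 0.414$, and optimizing $\max\{1/a,\,(a-2)/(a-1)\}$ over $a$ yields at best an upper bound of about $(3-\sqrt{5})/2\approx 0.382$, far from $3-2\sqrt{2}\approx 0.172$. Relatedly, the adversary's parameter in the correct construction tends to $c=3+2\sqrt{2}=(1+\sqrt{2})^2$; it is not numerically identified with the algorithm's threshold $\beta=1+\sqrt{2}$, so the ``matched parameters'' premise of your plan does not hold.
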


In other words, there is a loss of $7.84\%$ competitive-ratio-wise with the additional commitment obligation when we
compare the optimal competitive ratio $1/4=25\%$ without the commitment requirement (\cite{Baruah:RTS92,Koren&Shasha:SIAMJC95}) with
the result in Theorem \ref{thm:1} ($3-2\sqrt{2}=17.16\%$).

\subsection{DSC Scheduling Algorithm}\label{sec:algo}

The key idea behind DSC is to evaluate the admission decision based on the comparison of the potential profit associated with accepting and declining a job, if the job is ``difficult" to accommodate into the current schedule.
Even assuming the scheduler accepts the job just released, there are plenty of alternatives in the specific schedule of the job just released as well as the other pending jobs in the system (due to the acceptance of the new job, it may be necessary to update the schedule of the other jobs).
DSC evaluates the profits associated with the two options by restricting to one alternative in the many ways of updating of the schedule after accepting the newly released job. Specifically, if the newly released job can be appended in the end of the current schedule while still being within its deadline (the job is ``easy" to accommodate into the current schedule, see the blue, green and red jobs in Fig. \ref{fig:0}, Fig. \ref{fig:1} and Fig. \ref{fig:52}, respectively for illustration), the job is accepted and appended in the end of the current schedule. Otherwise (the job is ``difficult" to accommodate into the current schedule, see the green and brown jobs in Fig. \ref{fig:3} and Fig. \ref{fig:6}, respectively for illustration), the two options are weighed separately, described in detail in later paragraphs.
If the option of accepting the job is chosen, the schedule is updated by tight-scheduling the newly released job in the interval $[d-p,d]$ where $p$ and $d$ are the processing time and deadline of the newly released job, respectively. Then the part of the previous schedule after time $d-p$ is moved to start at time $d$, or the end of the current schedule, whichever comes later in time (see the red job in Fig. \ref{fig:9} for illustration). This moving may lead to some of the moved jobs to miss their deadlines. Therefore the schedule is again updated to remove the part of the moved jobs that comes after their deadlines.
The decision process can be interpreted as follows. When the scheduler decides to accept the newly released job, the job is profitable once accepted but difficult to accommodate into the current schedule. Therefore in order to accommodate the newly released profitable job, the scheduler sacrifices the jobs in the current schedule in the time interval $[d-p,d]$, some of which may have deadlines far into the future, thus still have potential in completion even after the moving.

To give the procedure to compute the profit associated with the two options, we first define the notions of \textit{peace-scheduled} and \textit{contention-scheduled} jobs. We term a job \textit{peace-scheduled} if it is scheduled without
affecting other already scheduled jobs (by the appendable statement on line \ref{line:B1}), and \textit{contention-scheduled} if it is scheduled
with moving some already admitted and unfinished jobs to a later time (by the not appendable statement on line \ref{line:B2} to \ref{line:B4}).

In between two consecutive admission decisions of contention-scheduled jobs, all the jobs are peace-scheduled and the accepted jobs are always appended in the end of the schedule. The procedure to determine the profit for accepting and declining the jobs that cannot be appended can be described as follows.
First execute (virtually) on the current tentative schedule the procedure associated with the decision to accept the difficult-to-accommodate job (including scheduling the newly released job in $[d-p,d]$ and postponing the previous jobs in $[d-p,d]$) and find out the jobs in the current tentative schedule that are affected in the received processing time. Denote by $\mathcal{J}_{\tiny\mbox{affect}}$ the set of jobs in the current tentative schedule that are affected in the received processing time.
The profit associated with the option of declining can be computed as the value of the subset of jobs in $\mathcal{J}_{\tiny\mbox{affect}}$ anticipated to complete by the current tentative schedule, less the portion of penalty attributed to the subset of jobs.
The profit associated with the option of accepting can be computed as the value of the newly released job, less the portion of penalty attributed to the acceptance of the newly released job (due to affecting the jobs in $\mathcal{J}_{\tiny\mbox{affect}}$). See Fig. \ref{fig:5}, Fig. \ref{fig:51}, Fig. \ref{fig:9} and Fig. \ref{fig:10} for illustration of the profits and the decision after comparing the profits. Fig. \ref{fig:11} depicts the final schedule of this instance.

To summarize, the dynamics of the system with DSC scheduler can be
described as follows: the scheduler maintains a tentative schedule at all times; when a job request is released, the scheduler checks
whether it is possible to append the new job at the end of the current tentative schedule
while meeting its deadline. If the deadline can be met,
then the job is admitted and appended in the end of the current tentative schedule. Otherwise, the scheduler determines
whether to admit the job based on the profits of the options of accepting and declining. If the profit associated with accepting is not sufficiently large, then the job is simply declined service. Otherwise, the job is scheduled in the time interval $[d_i-p_i,d_i]$; the previous schedule after time $d_i-p_i$ is then moved to start at time $d_i$, or the end of the current schedule, whichever comes later in time, and the scheduler further checks whether there are any moved jobs that already missed their deadlines after the moving, deletes them and moves the jobs accordingly to fill the gap left by the jobs deleted.

We now describe the details of the DSC algorithm.
The pseudo code of DSC is given below. At time $0$ the
scheduler starts the infinite loop in which the schedule is updated upon each job release.

\noindent\rule{0.48\textwidth}{0.2pt} \textbf{DSC Scheduling Algorithm} procedure

\begin{algorithmic}[1]

\LOOP

\STATE upon event: job $T_{arr}$ is released

\IF{$T_{arr}$ appendable}

\STATE append $T_{arr}$ to the end of the tentative schedule;
\label{line:B1}

\ELSE

\IF{$\mbox{Profit}_{accept}>(1+\beta)\mbox{Profit}_{decline}$}\label{line:B2}

\STATE append $T_{arr}$ at the end by $d_{arr}$\label{line:B3}

\STATE move and modify the schedule after $d_{arr}-p_{arr}$ accordingly \label{line:B31}

\ELSE

\STATE decline $T_{arr}$\label{line:B4}

\ENDIF

\ENDIF

\ENDLOOP
\end{algorithmic}
\noindent\rule{0.48\textwidth}{0.2pt}

As indicated in the algorithm pseudo code $T_{arr}$ gets admitted and appended to the current schedule if it is appendable (line
\ref{line:B1}). Otherwise, the profits $\mbox{Profit}_{accept}$ and $\mbox{Profit}_{decline}$
associated with admitting and declining $T_{arr}$ respectively get compared. If admitting $T_{arr}$ assumes better profit (line
\ref{line:B2}), then $T_{arr}$
is admitted and appended at the end by $d_{arr}$ (\textit{i.e.}, scheduled in the time interval $[d_{arr}-p_{arr},d_{arr}]$),
and the current schedule after $d_{arr}-p_{arr}$ is moved and modified accordingly (line
\ref{line:B3} and \ref{line:B31}). Otherwise, if admitting $T_{arr}$ does not have
better profit, $T_{arr}$ is declined service (line
\ref{line:B4}).
The threshold $1+\beta$ will be optimized after we derive the
competitive ratio as a function of $\beta$ (see Section \ref{sec:proof}). The appendable case takes $O(1)$ per job.
while the non-appendable case takes $O(n)$ per job, where $n$ is the number of jobs in the current schedule.

We state the differences between the DSC algorithm and
the algorithms in \cite{Baruah:RTS92,Koren&Shasha:SIAMJC95} for the situation without commitment: the contention of the processor is resolved using the profit (\textit{i.e.}, job values minus penalties) instead
of the job value alone.

\begin{figure}\centering\begin{psfrags}\psfrag{t}[c]{$t$}
  \includegraphics[width=1.8in]{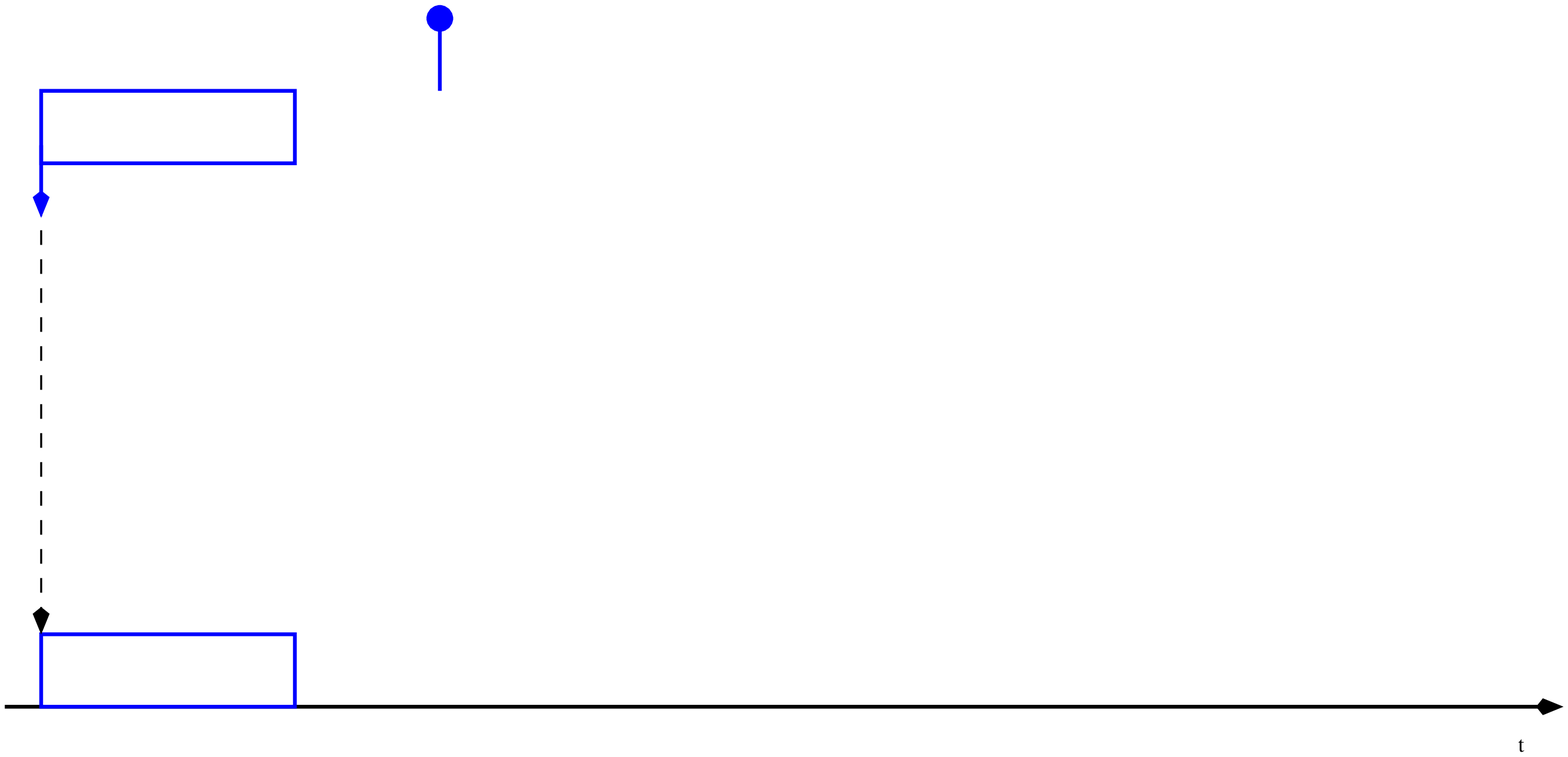}
  \caption{}\label{fig:0}
\end{psfrags}\end{figure}
\begin{figure}\centering\begin{psfrags}\psfrag{t}[c]{$t$}
  \includegraphics[width=1.8in]{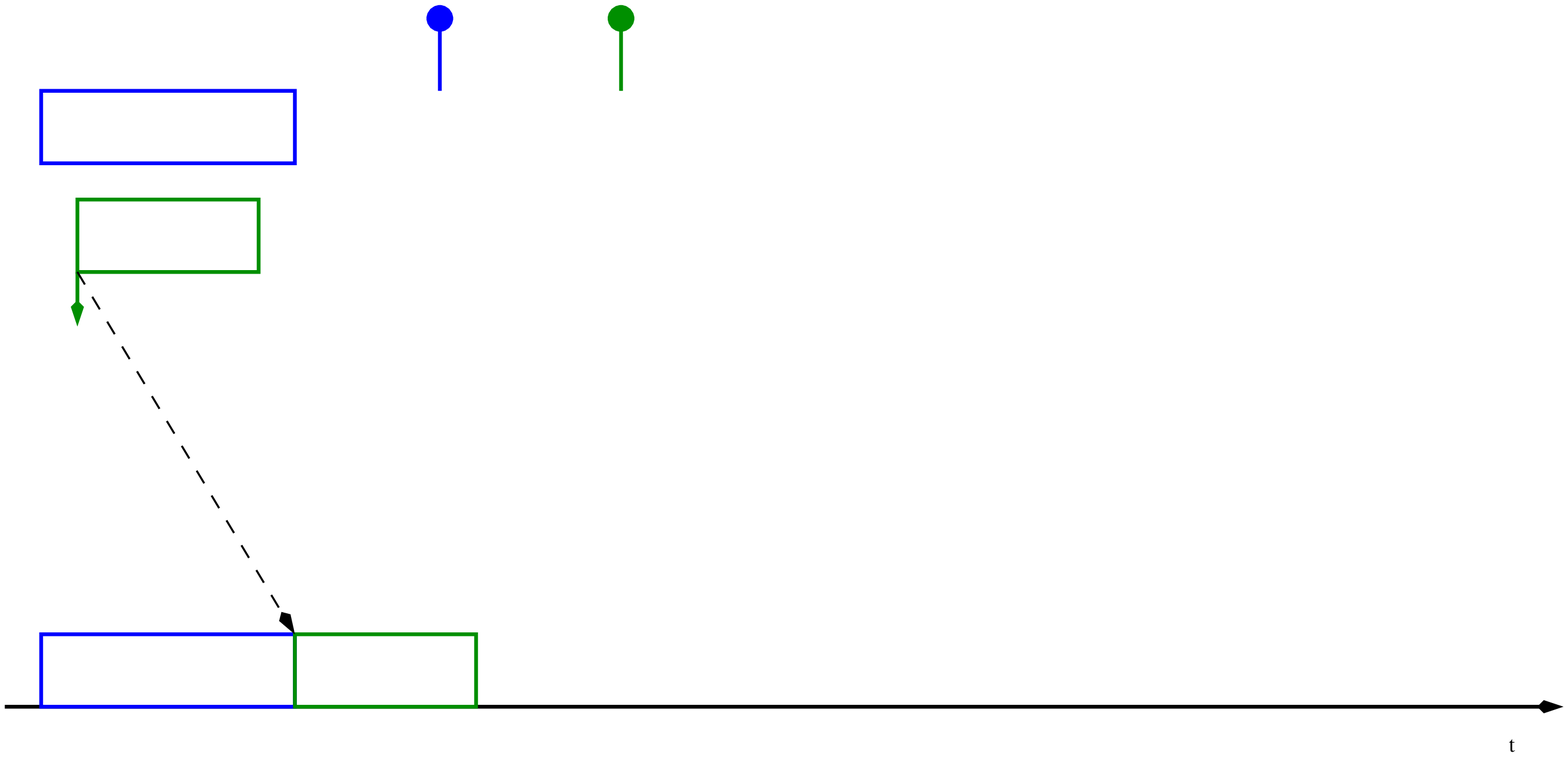}
  \caption{}\label{fig:1}
\end{psfrags}\end{figure}
\begin{figure}\centering\begin{psfrags}\psfrag{t}[c]{$t$}
  \includegraphics[width=1.8in]{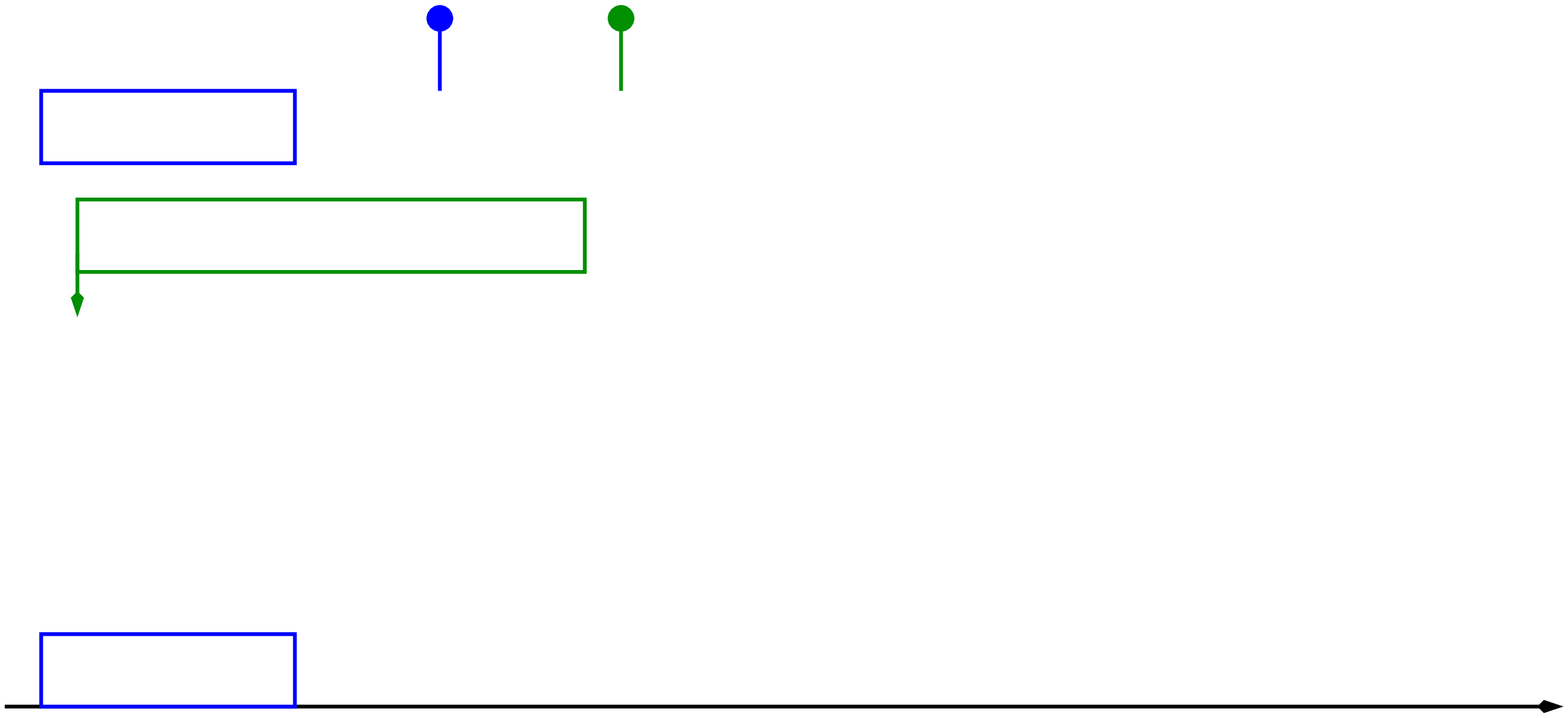}
  \caption{}\label{fig:3}
\end{psfrags}\end{figure}
\begin{figure}\centering\begin{psfrags}\psfrag{t}[c]{$t$}
  \includegraphics[width=1.8in]{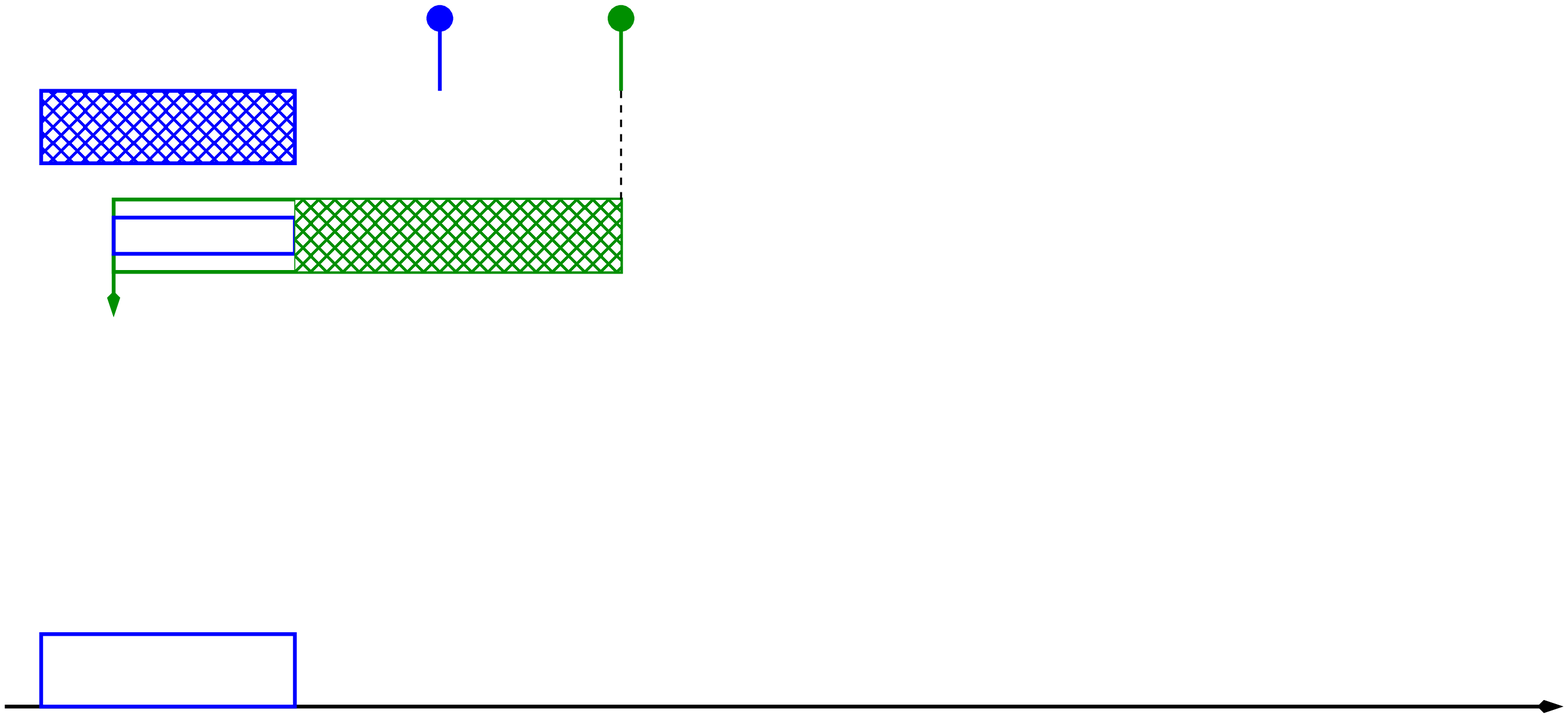}
  \caption{}\label{fig:5}
\end{psfrags}\end{figure}
\begin{figure}\centering\begin{psfrags}\psfrag{t}[c]{$t$}
  \includegraphics[width=1.8in]{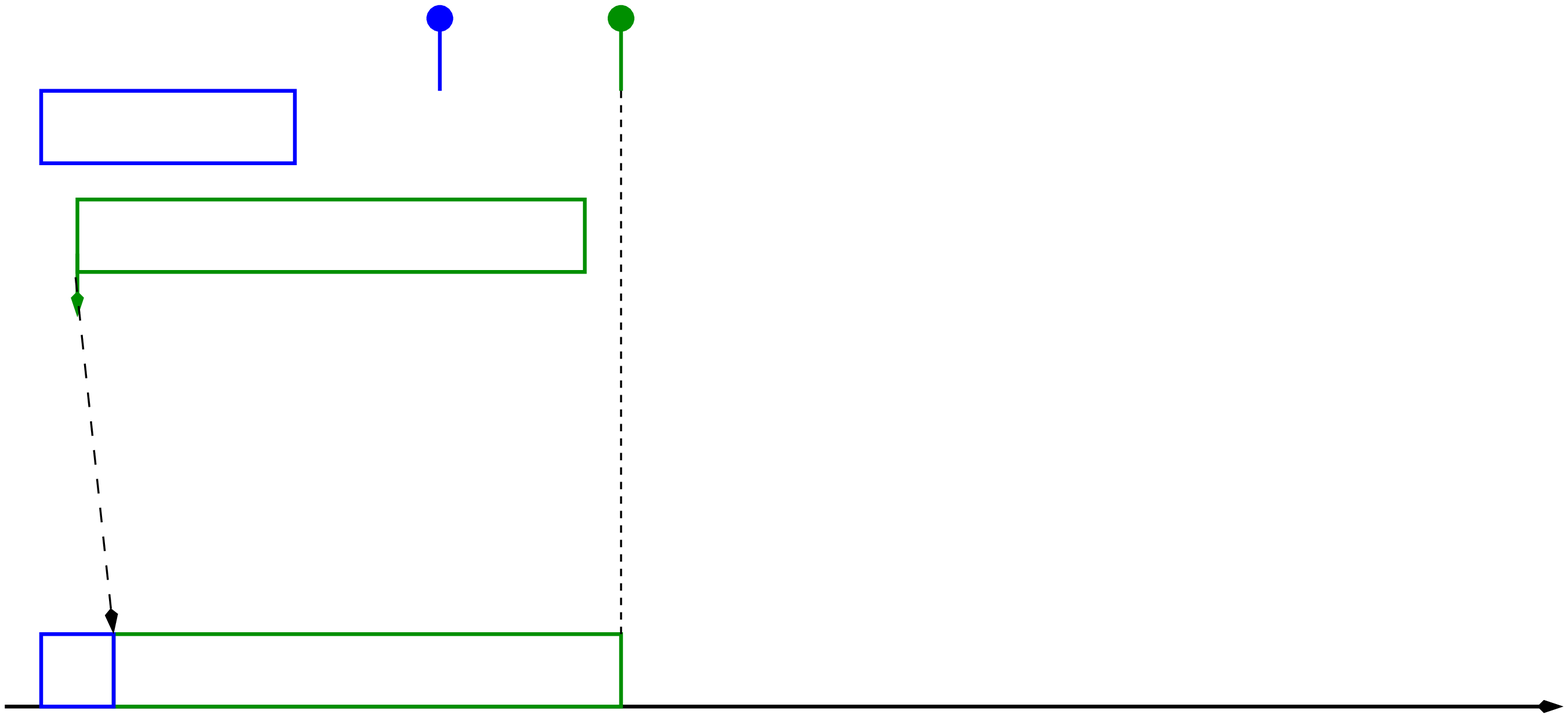}
  \caption{}\label{fig:51}
\end{psfrags}\end{figure}
\begin{figure}\centering\begin{psfrags}\psfrag{t}[c]{$t$}
  \includegraphics[width=1.8in]{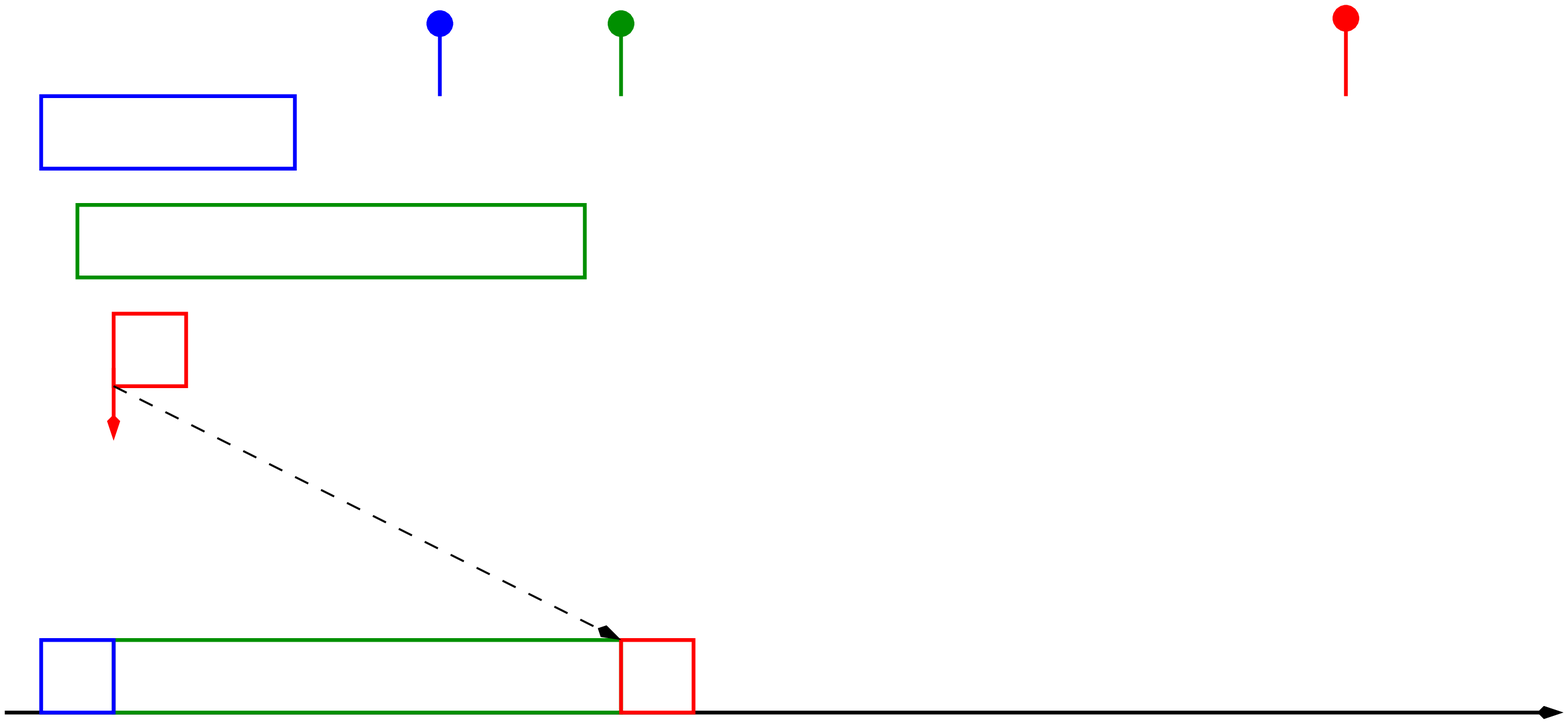}
  \caption{}\label{fig:52}
\end{psfrags}\end{figure}
\begin{figure}\centering\begin{psfrags}\psfrag{t}[c]{$t$}
  \includegraphics[width=1.8in]{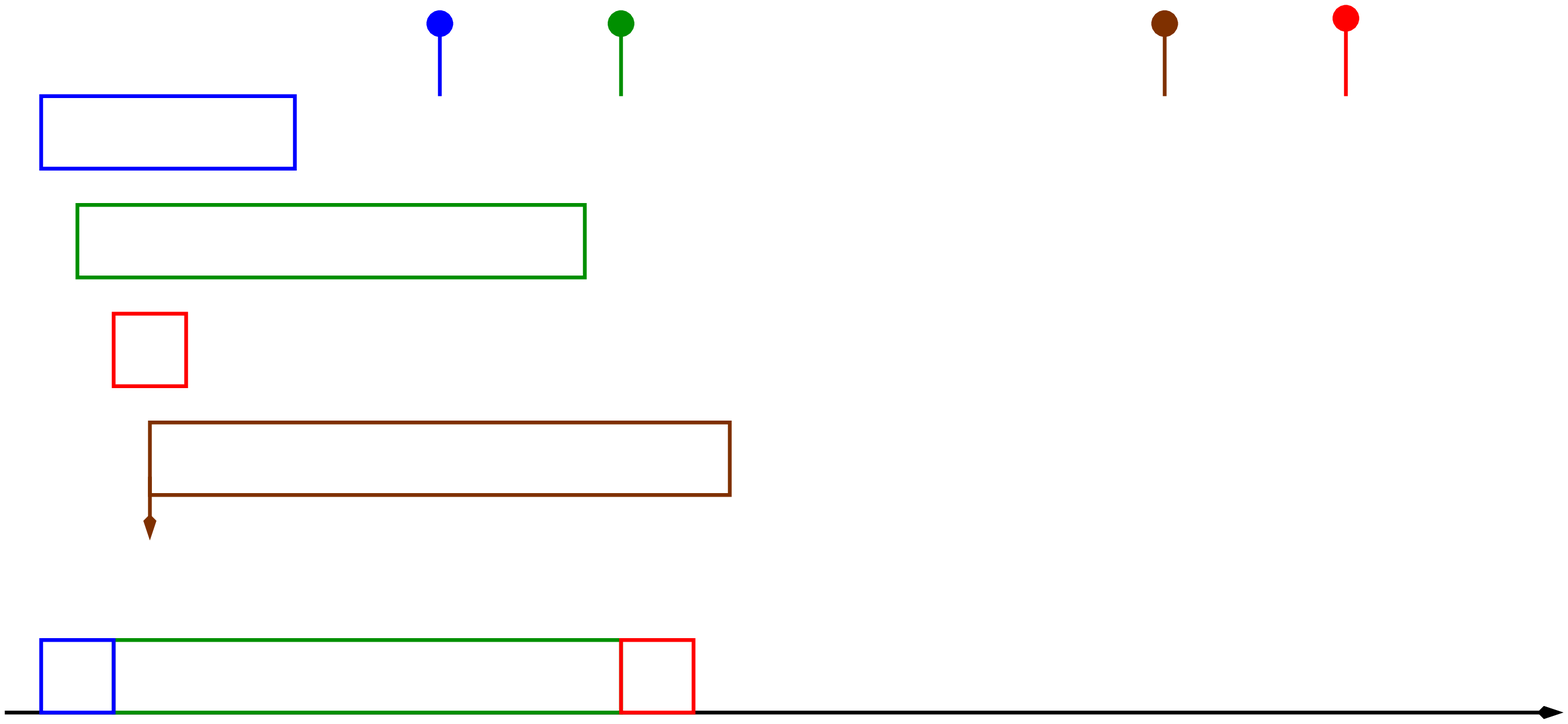}
  \caption{}\label{fig:6}
\end{psfrags}\end{figure}
\begin{figure}\centering\begin{psfrags}\psfrag{t}[c]{$t$}
  \includegraphics[width=1.8in]{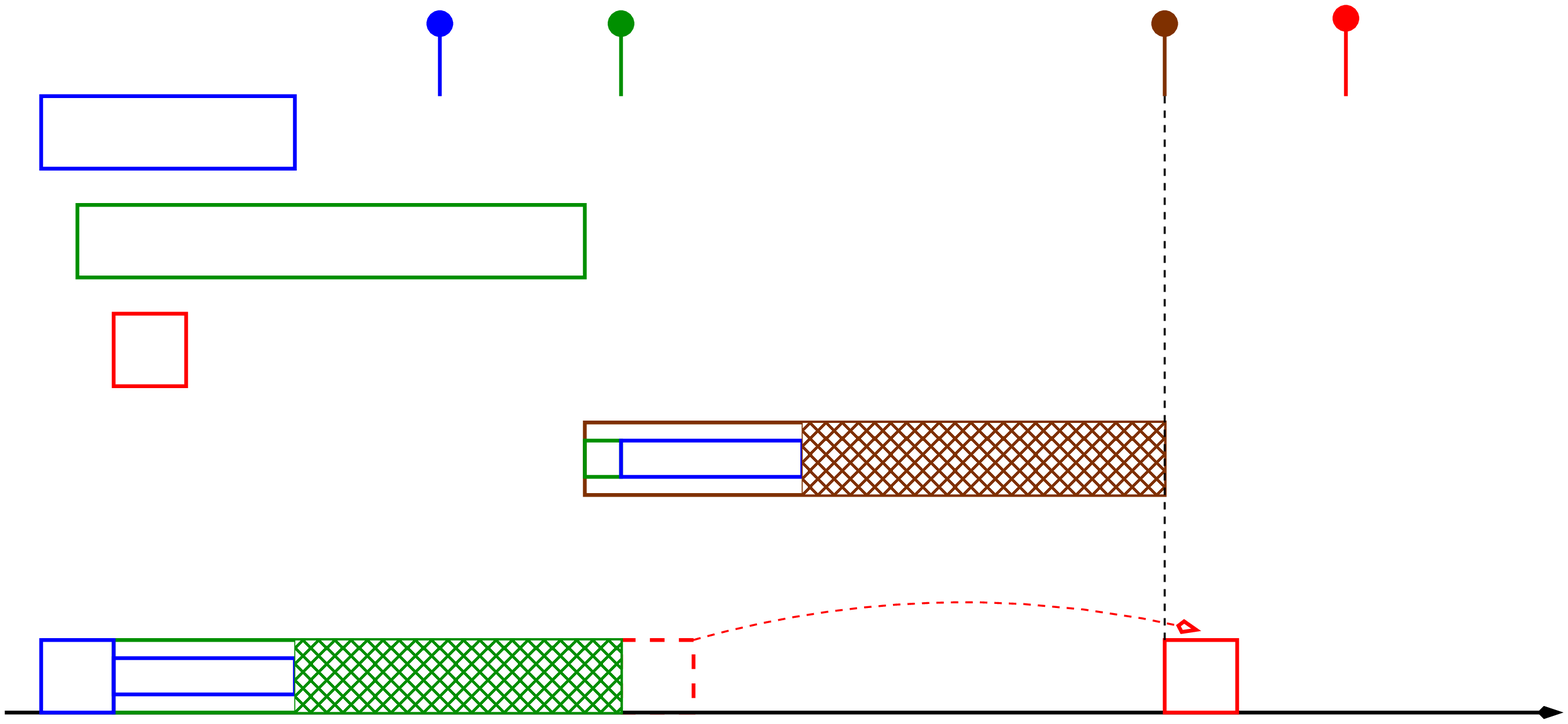}
  \caption{}\label{fig:9}
\end{psfrags}\end{figure}
\begin{figure}\centering\begin{psfrags}\psfrag{t}[c]{$t$}
  \includegraphics[width=1.8in]{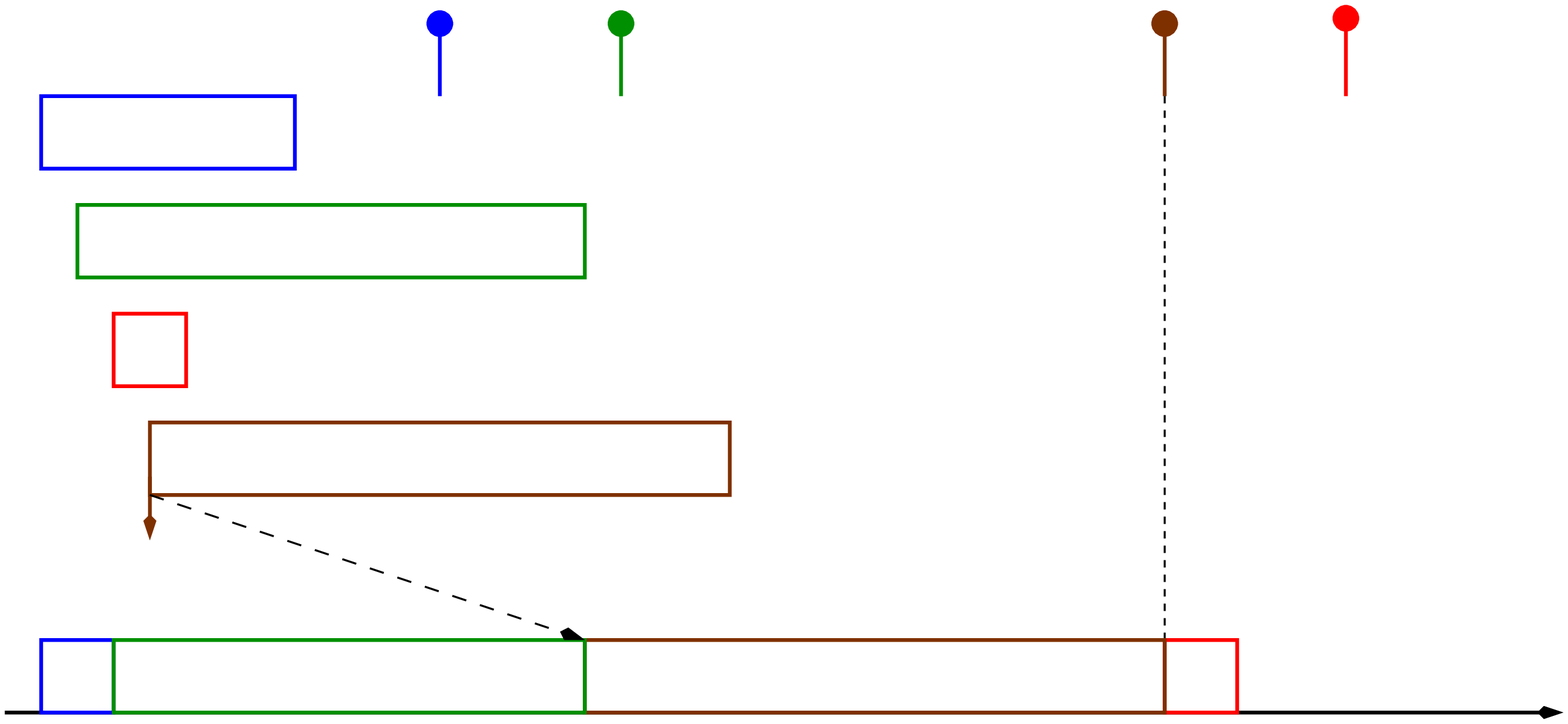}
  \caption{}\label{fig:10}
\end{psfrags}\end{figure}
\begin{figure}\centering\begin{psfrags}\psfrag{t}[c]{$t$}
  \includegraphics[width=1.8in]{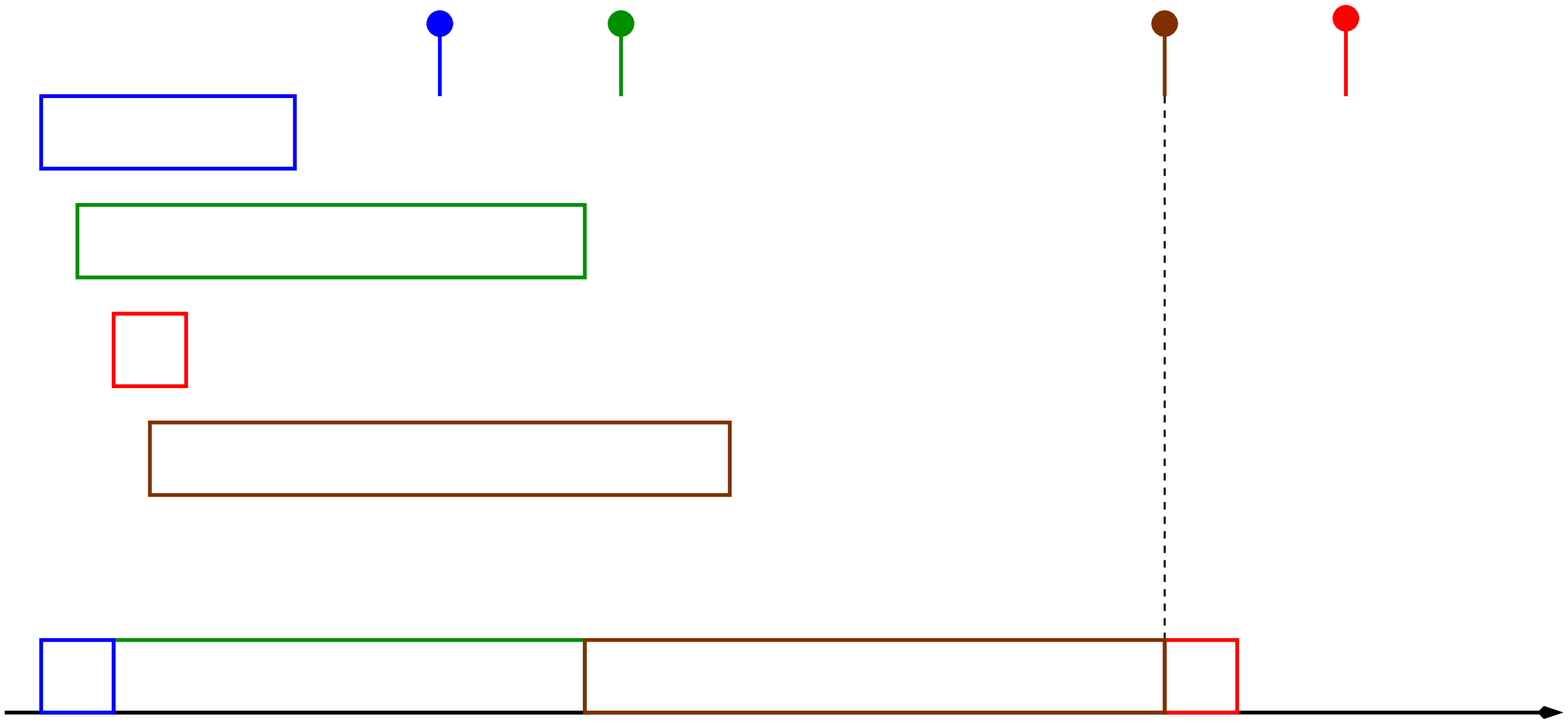}
  \caption{}\label{fig:11}
\end{psfrags}\end{figure}

\subsection{Analyzing the Structure of DSC Algorithm}

Denote a continuous busy interval (a continuous time interval in which the processor is busy executing jobs) by $B=[\underline{t},\overline{t}]$. We start the analysis of the structure of DSC algorithm by classifying the continuous busy intervals created by the execution of DSC into two different types with
different structures. The first type busy interval corresponds
to the situation where there is no processing time corresponding to contention-scheduled jobs.
In this case, all jobs admitted are
peace-scheduled and finish successfully by $\overline{t}$, the time at which the processor finishes the tentative schedule and gets idle.
The second type busy interval corresponds to the situation
where there are some contention-scheduled jobs inside the continuous busy interval.

Denote by $\textbf{\mbox{T}}_B$, $\textbf{\mbox{P}}_B$
and \textbf{C}$_B$ the total profit obtained in schedule from all jobs,
from peace-scheduled jobs and from contention-scheduled
jobs during $B$, respectively.
Note that the penalty is included in the profit $\textbf{\mbox{T}}_B$ and \textbf{C}$_B$. For
every continuous busy interval $B$ it holds that
\begin{equation}\label{eqn:sum}
\textbf{\mbox{T}}_B=\textbf{\mbox{P}}_B
+\textbf{\mbox{C}}_B.
\end{equation}

Denote by $\mathscr{B}$ the union of all continuous busy
intervals. The length of $\mathscr{B}$ will be denoted by $|\mathscr{B}|$. We
refer to its total, peace-scheduled and contention-scheduled value obtained in schedule
by \textbf{T}, \textbf{P} and \textbf{C}, respectively.

Lemma \ref{lemma:capacity} upper bounds the total processing time
in a continuous busy interval $B$ with \textbf{T}$_B$,
\textbf{P}$_B$ and \textbf{C}$_B$.

\begin{lemma}\label{lemma:capacity}
The total processing time of $B=[\underline{t},\overline{t}]$
satisfies
\begin{equation}
|B|\leq\textbf{\mbox{P}}_B+(1+\frac{1}{\beta-1})\textbf{\mbox{C}}_B=\textbf{\mbox{T}}_B+\frac{1}{\beta-1}\textbf{\mbox{C}}_B.
\end{equation}
\end{lemma}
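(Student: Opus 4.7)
The plan is to reduce the claimed capacity inequality to a bound on the total workload of admitted-but-incomplete jobs in $B$, and then discharge that bound using the DSC admission threshold. First, because the processor is continuously busy on $B$, I would write $|B|=\sum_i w_i$ over all admitted jobs in $B$, where $w_i$ denotes the processor time $T_i$ receives. Under the proportional-value model $v_i=p_i$, each such job contributes $w_i$ to $|B|$ and $w_i-p_i\,\mathbb{1}[T_i\text{ incomplete in }B]$ to profit (namely $p_i$ if complete, and $-(p_i-w_i)$ if not). Summing and rearranging yields the identity
\[
|B|=\mathbf{T}_B+\!\!\sum_{T_i\text{ incomplete in }B}\!\! p_i,
\]
so the claim reduces to showing $\sum_{T_i\text{ incomplete}} p_i\leq\tfrac{1}{\beta-1}\mathbf{C}_B$.

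Next I would charge this unfinished workload to contention-scheduled admissions. A peace-scheduled job is appended only at the tail of the current tentative schedule and never displaces anything, so in the absence of contention every admitted job would complete; likewise, every contention-scheduled job is tight-scheduled into $[d_j-p_j,d_j]$ and would complete absent later contention. Hence the set of incomplete jobs at the end of $B$ exists precisely because of the displacements induced by the chronological sequence of contention-scheduled admissions within $B$, and I would attribute the newly created unfinished workload at each contention event to the contention job responsible for the latest shift.

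Finally, for each contention-scheduled job $T_j$ I would invoke the admission threshold $\mbox{Profit}_{\text{accept}}>(1+\beta)\,\mbox{Profit}_{\text{decline}}$, with $\mbox{Profit}_{\text{accept}}=p_j-\pi_j$ (where $\pi_j$ is the penalty attributed to admitting $T_j$) and $\mbox{Profit}_{\text{decline}}$ equal to the net tentative profit of the affected set. Rearranging this inequality bounds the incremental workload lost through $T_j$'s admission by $\tfrac{1}{\beta-1}$ times the net profit $T_j$ contributes to $\mathbf{C}_B$; summing over the sequence of contention admissions in $B$ and telescoping then gives $\sum_{T_i\text{ incomp}} p_i\leq\tfrac{1}{\beta-1}\mathbf{C}_B$, delivering the lemma. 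The main technical hurdle is the bookkeeping in this telescoping step: a single downstream job can be displaced across several successive contention events, and its final unfinished workload has to be apportioned among the responsible contention jobs so that the per-event inequalities sum cleanly — essentially an induction on the ordered list of contention admissions in $B$, tracking the tentative schedule after each step.
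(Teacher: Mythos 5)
The paper states Lemma~\ref{lemma:capacity} without any proof, so there is no authorial argument to compare yours against line by line; judged on its own, your proposal contains a correct and genuinely useful first step but leaves the substantive content of the lemma unproved. The accounting identity $|B|=\mathbf{T}_B+\sum_{T_i\,\mathrm{incomplete}}p_i$ is right (each admitted job contributes its received processing time $w_i$ to $|B|$ and either $w_i$ or $w_i-p_i$ to profit), and together with Eq.~(\ref{eqn:sum}) it correctly reduces the lemma to $\sum_{\mathrm{incomplete}}p_i\leq\frac{1}{\beta-1}\mathbf{C}_B$; as a byproduct it also delivers Lemma~\ref{lemma:shortage}.

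The gap is the reduced inequality itself. You assert that ``rearranging'' the admission threshold bounds the incremental workload lost at each contention event by $\frac{1}{\beta-1}$ times the profit that event contributes to $\mathbf{C}_B$, but you never perform the rearrangement, and it is not a one-line consequence of $\mathrm{Profit}_{accept}>(1+\beta)\,\mathrm{Profit}_{decline}$. Three issues are left unaddressed: (i) the threshold compares \emph{profits}, whereas the quantity to be bounded is raw processing time of jobs that eventually fail --- affected jobs may already have received processing time and may only partially fail, so $\mathrm{Profit}_{decline}$ is neither obviously above nor below the workload newly doomed at that event; (ii) the credit side is unstable: a contention job $T_j$ contributes $p_j$ to $\mathbf{C}_B$ only if it survives all later contention events, and contributes $-s_j<0$ if it is itself later displaced, so the per-event credits you intend to sum can be eroded after the fact (it is precisely this erosion that the factor $1+\beta$ must dominate, and one must also rule out, or tolerate, $\mathbf{C}_B$ being small or negative); (iii) the specific constant $\frac{1}{\beta-1}$ has to emerge from the algebra --- it is where the eventual optimization $\beta=1+\sqrt{2}$ lives --- and nothing in the sketch produces it. Your closing sentence concedes that the apportioning and telescoping is ``the main technical hurdle''; that hurdle \emph{is} the lemma. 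Until the induction over the ordered contention admissions is actually carried out, with an explicit invariant tying the doomed workload of the current tentative schedule to the accumulated contention profit, what you have is a sound reduction plus a plan, not a proof.
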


Lemma \ref{lemma:nolargevalue} upper bounds the deadlines of the jobs
that are declined during the continuous busy interval $B$. Note that there are no jobs declined
when the processor is idle under DSC algorithm.

\begin{lemma}\label{lemma:nolargevalue}
Suppose $T_i$ was declined during the continuous busy interval
$B=[\underline{t},\overline{t}]$. Then
$$d_i-\overline{t}-\sum_{r_j\in B}s_j\leq(1+\beta)\textbf{\mbox{T}}_B,$$
where $s_j$ is the shortage (at time $\overline{t}$) of job $T_j$ that is admitted in $B$ ($r_j\in B$).
\end{lemma}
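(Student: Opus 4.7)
The plan is to combine the two conditions that force $T_i$ to be declined into a single inequality of the desired form. At time $r_i$ two facts hold: $(a)$ $T_i$ is not appendable, so $T_{\mathrm{end}}(r_i) > d_i - p_i$, where $T_{\mathrm{end}}(r_i)$ denotes the end of the tentative schedule at $r_i$; and $(b)$ the decline condition $\mbox{Profit}_{accept} \leq (1+\beta)\mbox{Profit}_{decline}$ holds. Fact $(a)$ is purely geometric, while fact $(b)$ supplies the $(1+\beta)$ factor in the conclusion.

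First I would convert $(a)$ into a bound on $d_i - \overline{t}$ by a work-balance argument on $[r_i,\overline{t}]$. On this sub-interval the processor performs exactly $\overline{t} - r_i$ units of work, admissions can only add work to the tentative schedule (via append or tight-schedule-plus-move), and contention-scheduling can only discard previously scheduled work; moreover, by definition of $\overline{t}$, the tentative schedule is empty at $\overline{t}$. Counting work in and out yields $T_{\mathrm{end}}(r_i) \leq \overline{t} + D$, where $D$ is the total work discarded in $(r_i,\overline{t}]$. Because every piece of discarded work belongs to some job $T_j$ with $r_j \in B$ and is recorded in its final shortage $s_j$, one has $D \leq \sum_{r_j \in B} s_j$. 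Combining with $(a)$ gives
\[
d_i - \overline{t} - \sum_{r_j \in B} s_j \;<\; p_i.
\]

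Next I would use $(b)$ to bound $p_i$ by $(1+\beta)\mathbf{T}_B$. By the definitions in Section \ref{sec:algo}, $\mbox{Profit}_{decline}$ is at most the aggregate value of the jobs in $\mathcal{J}_{\mbox{\tiny affect}}$ that the current tentative schedule anticipates to complete, while $\mbox{Profit}_{accept}$ equals $p_i$ minus the hypothetical shortage penalty that accepting $T_i$ would inflict on those same jobs. The key observation is that the jobs in $\mathcal{J}_{\mbox{\tiny affect}}$ are already in the realized tentative schedule at $r_i$, so their completed value is embedded in $\mathbf{T}_B + \sum_{r_j \in B} s_j$, and the hypothetical penalty they would suffer can only be drawn from that same pool of shortage mass. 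After rearranging $(b)$ and absorbing the shortage mass on both sides one arrives at $p_i \leq (1+\beta)\mathbf{T}_B$; this combined with the displayed inequality above yields the lemma.

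The main obstacle I expect is the second step, because $\mbox{Profit}_{accept}$ and $\mbox{Profit}_{decline}$ reference a counterfactual schedule (the one that would have arisen had $T_i$ been accepted), whereas $\mathbf{T}_B$ and the $s_j$'s reference the realized schedule under the decline. Legitimately substituting realized quantities for the hypothetical ones requires a careful job-by-job comparison and exploits the structural fact, central to the analysis underlying Lemma \ref{lemma:capacity}, that between successive contention-scheduling events inside a busy interval the tentative schedule evolves in a highly restricted, append-only manner. Managing this translation from the virtual profit ledger to the actual one is where the real technical work of the proof lies.
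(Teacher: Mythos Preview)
The paper states Lemma~\ref{lemma:nolargevalue} without proof, so there is no argument in the text to compare against; the comments below evaluate your proposal on its own.

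Your Step~1 (the work-balance bound $d_i-\overline t-\sum_{r_j\in B}s_j<p_i$) is fine. The genuine gap is Step~2: the intermediate target $p_i\le(1+\beta)\textbf{\mbox{T}}_B$ that you aim for is \emph{false}, so no amount of ``absorbing the shortage mass'' will produce it. A two-job instance already breaks it. Admit the tight job $T_0=(0,1,1,1)$, and at time $\delta$ release the tight job $T_1$ with $p_1=2+\beta-\epsilon-\delta$ and $d_1=2+\beta-\epsilon$. Then $T_1$ is not appendable, $\mbox{Profit}_{accept}=p_1-(1-\delta)=1+\beta-\epsilon$ and $\mbox{Profit}_{decline}=1$, so the test on line~\ref{line:B2} fails and $T_1$ is declined. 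The realized busy interval is $B=[0,1]$ with $\textbf{\mbox{T}}_B=1$ and $\sum s_j=0$, yet $p_1=2+\beta-\epsilon-\delta>(1+\beta)\textbf{\mbox{T}}_B$. The lemma itself still holds here ($d_1-\overline t=1+\beta-\epsilon\le 1+\beta$), so it is the decomposition, not the statement, that fails: stripping the hypothetical penalty off $\mbox{Profit}_{accept}$ and hoping to recover it later throws away exactly the slack the lemma needs.

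The repair is to keep the penalty attached and route the argument through $\mbox{Profit}_{accept}$ rather than $p_i$. Since the hypothetical acceptance displaces at most $E(r_i)-(d_i-p_i)$ units of scheduled work, one has $\mbox{Profit}_{accept}\ge p_i-\bigl(E(r_i)-(d_i-p_i)\bigr)=d_i-E(r_i)$; combining this with your own work-balance bound $E(r_i)\le\overline t+\sum_{r_j\in B}s_j$ gives
\[
d_i-\overline t-\sum_{r_j\in B}s_j\;\le\;\mbox{Profit}_{accept}\;\le\;(1+\beta)\,\mbox{Profit}_{decline}.
\]
What remains is to show $\mbox{Profit}_{decline}\le\textbf{\mbox{T}}_B$, i.e.\ that the anticipated profit of the affected jobs at the snapshot $r_i$ is dominated by the realized profit of the whole busy interval. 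This is precisely the counterfactual-versus-realized translation you flag in your last paragraph, and it is indeed where the work lies; but it should be attacked directly, not via the unattainable bound on $p_i$.
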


Lemma \ref{lemma:peace} provides a useful fact for the peace-scheduled jobs that eventually fail.
\begin{lemma}\label{lemma:peace}
Suppose $T_i$'s are peace-scheduled jobs that eventually failed. Then
$T_i$'s are such that $[r_i,d_i]\subset \mathscr{B}$.
\end{lemma}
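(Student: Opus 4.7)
My plan is to argue by contradiction: suppose there exists some $t \in [r_i,d_i]$ with $t \notin \mathscr{B}$, i.e.\ the processor is idle at $t$. I will derive a contradiction from the two hypotheses that $T_i$ was peace-scheduled at $r_i$ and that $T_i$ eventually fails to complete by $d_i$.

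First I would show that at time $t$ the job $T_i$ must still be present in the DSC tentative schedule with strictly positive remaining work. The admission of $T_i$ at $r_i$ places it into the tentative schedule as a peace-scheduled job, appended at the end by line \ref{line:B1}. Because $T_i$ eventually fails, the total processing $T_i$ receives by $d_i$ is strictly less than $p_i$, and in particular the remaining work on $T_i$ at any $t \le d_i$ is positive. Moreover, since $t \le d_i$, the deadline of $T_i$ has not yet elapsed, so $T_i$ has not yet been removed from the tentative schedule by the deadline-expiration check triggered by a contention-scheduled arrival (lines \ref{line:B3}--\ref{line:B31}).

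Next I would argue that DSC maintains the invariant that, at any time $\tau$, the tentative schedule occupies a contiguous block $[\tau, E(\tau)]$ of scheduled work, with the processor busy at $\tau$ whenever $E(\tau) > \tau$. The peace-scheduling branch only extends the block at its end (line \ref{line:B1}). The contention-scheduling branch inserts $T_{arr}$ in $[d_{arr}-p_{arr}, d_{arr}]$ and relocates the portion of the schedule previously lying after $d_{arr}-p_{arr}$ to begin immediately after $T_{arr}$, so contiguity is preserved; if any moved job ends up starting after its deadline it is removed and the successors slide forward to fill its place, again preserving contiguity. Combining this invariant with the previous step, $T_i$ contributes positive remaining work at time $t$, so $E(t) > t$ and the processor is busy at $t$, contradicting $t \notin \mathscr{B}$. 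The endpoints $r_i$ (at which $T_i$ immediately becomes scheduled work) and $d_i$ (just before which $T_i$ still has remaining work) are included by the same reasoning, yielding $[r_i,d_i]\subset\mathscr{B}$.

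The main obstacle is the contiguity invariant for the tentative schedule under the contention-scheduling relocation, since the phrase \emph{``moved to start at time $d$, or the end of the current schedule, whichever comes later''} must be read so that the relocated tail does not create an idle gap inside an otherwise busy interval; once this is verified by inspecting each way a contention-scheduled arrival updates the schedule, the rest of the argument is a clean contradiction driven by the fact that a failing $T_i$ keeps the schedule non-empty throughout $[r_i,d_i)$.
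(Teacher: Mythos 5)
Your proposal is correct and follows essentially the same route as the paper: assume some $\tau\in[r_i,d_i]$ lies outside $\mathscr{B}$, note that the failing job $T_i$ still has positive unfinished work there with its deadline not yet elapsed, and conclude the processor could not be idle at $\tau$. Your contiguity invariant for the tentative schedule is just a more explicit justification of the paper's one-line claim that the scheduler ``should be busy'' whenever an admitted, unfinished, non-expired job is present.
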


\begin{proof}
If $T_i$ is peace-scheduled at time $r_i$, then $T_i$ is included in the schedule since time $r_i$. Assume that $[r_i,d_i]\subset \mathscr{B}$ does not hold. Therefore there exists time instant $\tau\in[r_i,d_i]$ such that the processor is idle at time $\tau$. However, this contradicts the way the DSC algorithm runs due to the following argument. At time $\tau$ $T_i$ is not finished yet because $T_i$ failed eventually. Therefore the scheduler can work on $T_i$ at time $\tau$ in the tentative schedule either with the goal of completing $T_i$ for its value or with the goal of reducing the penalty associated with $T_i$. Thus either way the processor should be busy, contradicting the assumed fact that the processor is idle at time $\tau$. This contradiction proves $[r_i,d_i]\subset \mathscr{B}$.
\end{proof}

\subsection{Upper Bounding Optimal Offline Value}

Given a collection of jobs $I$, denote the optimal value that an
offline algorithm can obtain from scheduling the set of jobs $I$
by $S^\ast_{\mbox{\tiny offline}}(I)$. We derive an upper bound of
$S^\ast_{\mbox{\tiny offline}}(I)$ for $I$ being the set of released jobs. We partition the
collection of jobs $I=S^c\cup S^p\cup F^p\cup F^c\cup D$ where $S^c$ ($S^p$) denotes the
successful contention-scheduled (peace-scheduled) jobs, $F^c$ ($F^p$) denotes the
failed contention-scheduled (peace-scheduled) jobs and
$D$ the declined jobs under DSC algorithm.

Since $S^\ast_{\mbox{\tiny offline}}(S^c\cup S^p\cup F^p\cup F^c\cup D)\leq S^\ast_{\mbox{\tiny offline}}(S^p)+S^\ast_{\mbox{\tiny offline}}(S^c\cup F^c\cup F^p\cup D)$, we upper bound
the two terms separately.
We upper bound the term $S^\ast_{\mbox{\tiny offline}}(S^c\cup F^p\cup F^c\cup D)$ by considering the optimal offline algorithm for $S^c\cup F^p\cup F^c\cup D$ under a \textit{processing-time-granted-value} setting.
(The granted-value setting is first used in \cite{Koren&Shasha:SIAMJC95}
to treat the no commitment case.) Specifically,
the offline scheduler receives an additional \textit{granted value}
besides the value obtained from $S^c\cup F^c\cup F^p\cup D$. The amount of granted value
depends on the offline schedule: unit value will be
granted for unit processing time in $\mathscr{B}$ that is not used for
executing jobs in $S^c\cup F^p\cup F^c\cup D$.

Under the granted-value setting the optimal offline scheduler must
consider that scheduling a job might reduce the granted value (since
processing time in $\mathscr{B}$ is used). Executing a job $T_i$ results in
a gain of $v_i$ and a loss of the granted value for the processing
time of $T_i$ that is executed in $\mathscr{B}$.

One offline schedule under the granted-value setting is to schedule
no jobs in $S^c\cup F^p\cup F^c\cup D$ (therefore leaving the entire $\mathscr{B}$ period
untouched) and get only the (whole) granted value. This
scheduling-nothing schedule obtains a value of
$|\mathscr{B}|$. Since Lemma
\ref{lemma:capacity} upper bounds the total processing time in a
continuous busy interval, we can upper bound the total processing time in
$\mathscr{B}$, and thus the value obtained by the scheduling-nothing
schedule.

However, the optimal offline schedule under the granted-value
setting may use some processing time of $\mathscr{B}$ to schedule certain
jobs in $S^c\cup F^p\cup F^c\cup D$ to obtain more value than the scheduling-nothing
schedule. To upper bound $S^\ast_{\mbox{\tiny offline}}(S^c\cup F^p\cup F^c\cup D)$ under the granted-value setting, we
take the value of the scheduling-nothing schedule as a benchmark and
turn to upper bounding the net gain the optimal schedule can have
over the scheduling-nothing schedule by completing some jobs in $S^c\cup F^p\cup F^c\cup D$.

We first observe that any job $T_i\in F^c\cup S^c$ will be such that $[r_i,d_i]\subset \mathscr{B}$, since at the time $r_i$, $T_i$ is contention-scheduled in the interval $[d_i-p_i,d_i]$. Therefore the busy period covers $[r_i,d_i]$.
We also observe by Lemma \ref{lemma:peace} that a peace-scheduled job $T_f\in F^p$ which eventually fails
also satisfies $[r_i,d_i]\subset \mathscr{B}$.
By the definition of the granted value we can see that under the optimal offline algorithm, no job is
scheduled entirely in $\mathscr{B}$ because the granted value lost would be
equal to the value of the job. Therefore the optimal offline schedule will not choose to schedule any jobs in $S^c\cup F^p\cup F^c$.

Since we are interested in scheduling jobs in $S^c\cup F^p\cup F^c\cup D$ such that only
small amount of $\mathscr{B}$ processing time is used (thus small loss of
granted value), we leverage the fact that when a job $T_d\in D$ is declined during busy interval $B$, the deadline of $T_d$ can not be too far with respect to the end of $B$, given by Lemma \ref{lemma:nolargevalue}. Lemma \ref{lemma:BU_SY} provides the earliest time
for an offline scheduler to execute a job in $D$ outside $\mathscr{B}$.

\begin{lemma}\label{lemma:BU_SY}
Suppose $T_d\in D$ is declined by the online scheduler at
time $r_d$ and $r_d\in
B=[\underline{t},\overline{t}]$. Then, if $T_d$ is to be executed
by the offline scheduler anywhere outside $\mathscr{B}$ it must be after
$\overline{t}$.
\end{lemma}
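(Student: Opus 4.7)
The plan is to use only two ingredients: the release-time constraint on $T_d$ (no scheduler, offline or online, can process a job before it is released), and the fact that $r_d$ lies in the continuous busy interval $B=[\underline{t},\overline{t}]\subset\mathscr{B}$.

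First I would fix any time instant $t$ at which the offline scheduler executes $T_d$ outside $\mathscr{B}$, so $t\notin\mathscr{B}$. Because $T_d$ cannot be processed prior to its release, we have $t\geq r_d$. Since $r_d\in B$ by hypothesis, this yields $t\geq r_d\geq \underline{t}$.

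Next I would combine this with the fact that $B\subset\mathscr{B}$. Any point of $B$ belongs to $\mathscr{B}$, so $t\notin\mathscr{B}$ implies $t\notin[\underline{t},\overline{t}]$. Together with $t\geq\underline{t}$, the only possibility left is $t>\overline{t}$, which is exactly the claim.

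There is no real obstacle here; the lemma is essentially a definitional observation assembling the release-time feasibility constraint with the fact that $r_d$ sits inside the busy interval $B$. The only point worth stating carefully is that it suffices to check each individual processing instant $t$ of $T_d$ separately, so the argument does not need to track whether the offline scheduler processes $T_d$ contiguously or in preempted fragments.
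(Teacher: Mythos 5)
Your proof is correct and takes essentially the same route as the paper's, which simply notes that $r_d\in B$ gives $[r_d,\overline{t}]\subset\mathscr{B}$ and combines this with the release-time constraint. You merely spell out the same observation pointwise over the processing instants of $T_d$.
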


\begin{proof}
The proof can be easily done using the fact $r_d\in
B=[\underline{t},\overline{t}]$, leading to $[r_d,\overline{t}]\subset \mathscr{B}$.
\end{proof}

Lemma \ref{lemma:net} upper bounds the net gain the optimal offline
scheduler will obtain over the scheduling-nothing benchmark,
restricted to the jobs that are declined during $B$.
\begin{lemma}\label{lemma:net}
Under the granted-value setting the total net gain obtained by the
offline algorithm from scheduling the jobs in $S^c\cup F^p\cup F^c\cup D$ released in
$B=[\underline{t},\overline{t}]$ is no greater than
$(1+\beta)\textbf{\mbox{T}}_B+\sum_{r_j\in B}s_j$.
\end{lemma}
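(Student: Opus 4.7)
The plan is to reduce Lemma~\ref{lemma:net} to a single application of Lemma~\ref{lemma:nolargevalue} by first identifying which subset of $S^c\cup F^p\cup F^c\cup D$ can yield strictly positive net gain to the optimal offline scheduler under the granted-value setting, and then bounding the total contribution of that subset via a wall-clock capacity argument anchored at $\overline{t}$.

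The first step reuses the observation already recorded immediately above Lemma~\ref{lemma:BU_SY}: every $T_i\in S^c\cup F^c\cup F^p$ satisfies $[r_i,d_i]\subset\mathscr{B}$, so any processing the offline devotes to such a job is subtracted from the granted value at rate one. Because $v_i=p_i$, the value the offline can recover upon completing such a job is at most the granted value it forfeits, and any missed deadline would only add a penalty. Consequently the optimal offline schedule allocates no processing time to $S^c\cup F^c\cup F^p$, and the entire net gain is collected from jobs in $D\cap B$.

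For the second step, take any $T_i\in D\cap B$ that the offline commits to and completes, and split its processing time into a portion $p_i^{\mathrm{in}}$ done inside $\mathscr{B}$ and a portion $p_i^{\mathrm{out}}=p_i-p_i^{\mathrm{in}}$ done outside. The inside portion contributes zero net gain (value recovered cancels granted value forfeited), while by Lemma~\ref{lemma:BU_SY} every outside portion must lie in the window $[\overline{t},d_i]$. Hence the net gain attributable to $T_i$ is at most $p_i^{\mathrm{out}}$, and summing over all completed declined jobs gives
\[
\sum_{T_i\in D\cap B}p_i^{\mathrm{out}}\;\leq\;\max_{T_i\in D\cap B}d_i\;-\;\overline{t},
\]
because all of this processing fits within the single wall-clock window $[\overline{t},\max_{T_i\in D\cap B}d_i]$. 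Applying Lemma~\ref{lemma:nolargevalue} to the declined job attaining the maximum deadline then gives $\max_{T_i\in D\cap B}d_i-\overline{t}\leq(1+\beta)\textbf{\mbox{T}}_B+\sum_{r_j\in B}s_j$, which is exactly the bound claimed.

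The step requiring the most care is the per-job accounting above. I must rule out clever interleavings of inside-$\mathscr{B}$ and outside-$\mathscr{B}$ execution that might try to inflate the net gain beyond $p_i^{\mathrm{out}}$, and verify that strategies in which the offline commits to a declined job but fails to complete it can only decrease the net gain via the shortage penalty (so they pose no threat to the inequality). The wall-clock argument must also be applied strictly to declined jobs from this specific $B$: declined jobs released in other continuous busy intervals may happen to share the same post-$\overline{t}$ idle time, but they are accounted for separately when Lemma~\ref{lemma:net} is later summed over all continuous busy intervals in $\mathscr{B}$.
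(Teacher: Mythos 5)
Your proposal is correct and follows essentially the same route as the paper's own proof: argue that only declined jobs can contribute positive net gain (since jobs in $S^c\cup F^c\cup F^p$ have $[r_i,d_i]\subset\mathscr{B}$ and forfeit granted value at rate one), confine their useful processing to the window $[\overline{t},\max_{T_i\in D_B}d_i]$ via Lemma~\ref{lemma:BU_SY}, and close with Lemma~\ref{lemma:nolargevalue}. Your explicit $p_i^{\mathrm{in}}/p_i^{\mathrm{out}}$ accounting just spells out a step the paper leaves implicit.
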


\begin{proof}
According to Lemma \ref{lemma:nolargevalue}
if $T_i$ was declined during the busy interval
$B=[\underline{t},\overline{t}]$. Then
$$d_i-\overline{t}\leq(1+\beta)\textbf{\mbox{T}}_B+\sum_{r_j\in B}s_j,$$
where $s_j$ is the shortage (at time $\overline{t}$) of job $T_j$ that is admitted in $B$ and the summation of $r_j\in B$ is summing over all jobs $T_j$ that are admitted in $B$.

On the other hand under the granted-value setting the net gain obtained by the
offline algorithm from scheduling the jobs in $S^c\cup F^p\cup F^c\cup D$ released in
$B=[\underline{t},\overline{t}]$ can only come from $D$ and the earliest
time $T_d$ can be executed
by the offline scheduler outside $\mathscr{B}$ is $\overline{t}$.

Therefore the net gain obtained by the
offline algorithm from scheduling the jobs in $S^c\cup F^p\cup F^c\cup D$ in
$B=[\underline{t},\overline{t}]$ is bounded by $$\max_{T_i\in D_B}d_i-\overline{t}\leq(1+\beta)\textbf{\mbox{T}}_B+\sum_{r_j\in B}s_j,$$
where $D_B$ is the subset of $D$ released in $B$.
\end{proof}

Lemma \ref{lemma:shortage} upper bounds the total shortage $\sum_{r_j\in B}s_j$ (at time $\overline{t}$).
\begin{lemma}\label{lemma:shortage}
$\sum_{r_j\in B}s_j\leq |B|-\textbf{\mbox{T}}_B$.
\end{lemma}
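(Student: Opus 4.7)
The plan is to account for the processor's work inside $B$ in two ways and then unpack the profit $\textbf{T}_B$. Since $B=[\underline{t},\overline{t}]$ is a continuous busy interval, the processor performs exactly $|B|$ units of work during $B$. For each job $T_j$ with $r_j\in B$, the work performed on $T_j$ by time $\overline{t}$ equals $p_j-s_j$ by the definition of the shortage at $\overline{t}$, and since $T_j$ cannot be processed before $r_j\in B$ (and all of $[r_j,\overline{t}]$ is contained in $B$), this work lies entirely inside $B$. The work intervals for distinct jobs are disjoint, so summing and absorbing any extra work performed in $B$ on jobs released before $B$ into a nonnegative slack gives
\[
\sum_{r_j\in B}(p_j-s_j)\;\leq\;|B|.
\]

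Next I would expand $\textbf{T}_B$ using the proportional value model $v_j=p_j$. Each job $T_j$ with $r_j\in B$ is either successful (contributing its value $p_j$, with $s_j=0$) or failed (contributing the penalty $-s_j$, since DSC performs no further processing on a failed job past its deadline $d_j\leq\overline{t}$, which forces the shortage at the deadline to coincide with $s_j$ at $\overline{t}$). This gives
\[
\textbf{T}_B \;=\; \sum_{\substack{r_j\in B\\\text{succ}}}p_j \;-\; \sum_{\substack{r_j\in B\\\text{fail}}}s_j.
\]
Combining the two displays and using that $\sum_{r_j\in B}s_j=\sum_{r_j\in B,\,\text{fail}}s_j$ (successful jobs have zero shortage), the target inequality $\sum_{r_j\in B}s_j\leq|B|-\textbf{T}_B$ collapses to $\sum_{r_j\in B,\,\text{fail}}(p_j-s_j)\geq 0$, which is immediate since the work done on any job cannot exceed its processing time.

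The main obstacle, modest as it is, lies in the bookkeeping of the first step: establishing that the work performed on each job admitted in $B$ is exactly $p_j-s_j$ and is contained in $B$. This uses only that no processing of $T_j$ can occur before $r_j\geq\underline{t}$ and that the processor is idle immediately after $\overline{t}$ (so DSC would have drained any residual pending work on an admitted job rather than letting the processor go idle), which together confine all processing of $T_j$ to $B$. Once this accounting is in place, the remainder of the proof is routine algebra.
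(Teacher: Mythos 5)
Your proof is correct and follows essentially the same route as the paper's: decompose jobs admitted in $B$ into finished and unfinished, observe that the shortage of an unfinished job cancels against its penalty contribution to $\textbf{\mbox{T}}_B$, and bound the remaining total (the value, i.e.\ processing time, of the finished jobs) by the length $|B|$ of the busy interval. Your version is merely more explicit about the work-accounting step that the paper leaves implicit.
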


\begin{proof}
Since each $T_j$ that is not finished in $B$ contributes $s_j$ to $\sum_{r_j\in B}s_j$ and $-s_j$ to $\textbf{\mbox{T}}_B$, and each $T_j$ that is finished in $B$ contributes $0$ to $\sum_{r_j\in B}s_j$, $v_j$ to $\textbf{\mbox{T}}_B$ and $v_j$ to $|B|$, the lemma is proved.
\end{proof}

\subsection{Proving Theorem \ref{thm:1}}\label{sec:proof}

We now prove Theorem \ref{thm:1} after bounding the net gain of
scheduling the jobs in $S^c\cup F^p\cup F^c\cup D$.
\begin{proof}
Lemma \ref{lemma:net} bounds the maximum net gain for each busy
interval. By construction, each job is accounted for in exactly one
continuous busy interval. Therefore, summing over all busy
intervals we conclude using Lemma \ref{lemma:shortage} that under the granted value setting the total net gain during the entire
execution horizon obtained by the offline algorithm from scheduling the jobs of $S^c\cup F^p\cup F^c\cup D$ is bounded by $|\mathscr{B}|+\beta\textbf{\mbox{T}}$, where $\mathscr{B}$ is the union of all the busy intervals.

Combining the upper bound of the total net gain and the value of
the scheduling-nothing benchmark, \textit{i.e.},
the processing time in $\mathscr{B}$, yields
\begin{eqnarray}
\nonumber S^\ast_{\mbox{\tiny offline}}(I) &\leq& S^\ast_{\mbox{\tiny offline}}(S^c\cup F^p\cup F^c\cup D)+S^\ast_{\mbox{\tiny offline}}(S^p) \\
\nonumber &\leq& |\mathscr{B}|+(|\mathscr{B}|+\beta\textbf{\mbox{T}})+\mbox{value}(S^p) \\
\nonumber &\leq& 2|\mathscr{B}|+\beta\textbf{\mbox{T}}+\mbox{value}(S^p) \\
\label{eqn:all}&\leq& (\beta+2)\textbf{\mbox{T}}+\frac{2}{\beta-1}\textbf{\mbox{C}}+\textbf{\mbox{P}} \\
\label{eqn:all2}&\leq& (\beta+2)\textbf{\mbox{T}}+\frac{2}{\beta-1}\textbf{\mbox{C}}+\frac{2}{\beta-1}\textbf{\mbox{P}} \\
\nonumber &\leq& (\beta+2+\frac{2}{\beta-1})\textbf{\mbox{T}} \\
\label{eqn:all3}&\leq& (3+2\sqrt{2})\textbf{\mbox{T}},
\end{eqnarray}
where Eq. (\ref{eqn:all}) is obtained from
summing Lemma \ref{lemma:capacity} over all continuous busy intervals,
Eq. (\ref{eqn:all2}) holds when $\beta\leq3$, and Eq. (\ref{eqn:all3}) is obtained by optimizing over $\beta$, which yields $\beta=1+\sqrt{2}$ and
\begin{equation}
 S^\ast_{\mbox{\tiny offline}}(I)\leq(3+2\sqrt{2})\textbf{\mbox{T}},
\end{equation}
where $I=S^p\cup S^c\cup F^p\cup F^c\cup D$ is the set of released jobs.

Since the value of the optimal offline schedule is at most $S^\ast_{\mbox{\tiny offline}}(I)$ and the profit obtained by DSC
algorithm is $\textbf{\mbox{T}}$, the competitive ratio $3-2\sqrt{2}$ is shown to be achievable by DSC algorithm.
\end{proof}

\section{Upper Bound on Competitive Ratio}

An adversary argument establishes the upper bound on the competitive ratio ever achievable by any online scheduler.
Specifically, we construct a
job input instance $I$, such that the competitive ratio for the constructed job input instance
is upper bounded by $3-2\sqrt{2}$.

Consider the input instance $I$ constructed
by the adversary which contains a sequence of tight jobs ($r_i+p_i=d_i$). The first tight job $T_0$ is released at time $0$ with processing length $x_0=1$. The offline adversary observes the action of the online scheduler and then decides future job releases.
Upon the release of $T_0$ the online scheduler can choose either to admit or to decline the job $T_0$.

If declined, the offline adversary will choose to release no more jobs and eventually the offline adversary obtains $x_0$ while the online scheduler $0$.

If admitted, the adversary will choose to release another tight job $T_1$ at time $\epsilon$, with processing length $x_1$. Then similarly the
online scheduler can choose either to admit or to decline the job $T_1$ upon arrival.

Similarly, if declined, the offline adversary will choose to release no more jobs and eventually the offline adversary obtains $x_1$ while the online scheduler $x_0$.

If admitted, the adversary released another tight job $T_2$ at time $2\epsilon$, with processing length $x_2$.

The whole process keeps going until the online scheduler chooses to decline the first job in the process. For the $(n+2)$th release the offline adversary releases the tight job $T_{n+1}$ at time $(n+1)\epsilon$, with processing length $x_{n+1}$. Then similarly the
online scheduler can choose either to admit or to decline the job $T_{n+1}$ upon arrival.

If declined, the offline adversary will choose to release no more jobs and eventually the offline adversary obtains $x_{n+1}$ while the online scheduler $x_n-\sum_{j=0}^{n-1}{x_j}$, where $\sum_{j=0}^{n-1}{x_j}$ is the non-completion penalty paid by the online scheduler (the non-completion penalty should ideally include a term with $\epsilon$, however, the adversary will choose $\epsilon$ to be arbitrarily small and the term can be left out in the following derivation).

For the above job release up to $T_{m+1}$ (\textit{i.e.}, even the online scheduler chooses to admit up to job $T_{m+1}$, the offline adversary will not release new jobs), the competitive ratio ever achievable for the above constructed input instance is
\begin{eqnarray}\label{eqn:cr}
\max\{\sigma_1,\sigma_2,\ldots,\sigma_{n+1},\ldots,\sigma_{m+1},\frac{x_{m+1}-\sum_{j=0}^{m}{x_j}}{x_{m+1}}\},
\end{eqnarray}
where $\sigma_{i+1}=\frac{x_i-\sum_{j=0}^{i-1}{x_j}}{x_{i+1}}$, for $i=0,1,\ldots,m$.

Now we design the processing lengths $x_i$ to upper bound the value of Eq. (\ref{eqn:cr}). We first set all terms but the last inside the minimum in Eq. (\ref{eqn:cr}) to be $1/c$.
\begin{eqnarray}
\nonumber cx_0&=&x_1 \\
\nonumber c(x_1-x_0)&=&x_2 \\
\nonumber &\vdots& \\
\label{eqn:eq3} c(x_n-\sum_{j=0}^{n-1}{x_j})&=&x_{n+1}\\
\label{eqn:eq2} c(x_{n+1}-\sum_{j=0}^{n}{x_j})&=&x_{n+2}\\
\nonumber &\vdots& \\
\nonumber c(x_m-\sum_{j=0}^{m-1}{x_j})&=&x_{m+1}
\end{eqnarray}

We can then obtain the recursion by subtracting Eq. (\ref{eqn:eq3}) from Eq. (\ref{eqn:eq2}),
\begin{equation}\label{eqn:recursion1}
x_0=1,\quad x_1=c,\quad c(x_{n+1}-2x_n)=x_{n+2}-x_{n+1},
\end{equation}
with the characteristic function
\begin{equation}
x^2-(c+1)x+2c=0.
\end{equation}

We still need the last term inside the maximum in Eq. (\ref{eqn:cr}) to be no greater than $1/c$,
\begin{equation}\label{eqn:need}
\frac{x_{m+1}-\sum_{j=0}^{m}{x_j}}{x_{m+1}}\leq\frac{1}{c}.
\end{equation}

Rewrite Eq. (\ref{eqn:need}) to be
\begin{equation}
\frac{x_{m+1}-\sum_{j=0}^{m}{x_j}}{x_{m+1}}\leq\frac{x_m-\sum_{j=0}^{m-1}{x_j}}{x_{m+1}},
\end{equation}
which implies
\begin{equation}
x_{m+1}\leq 2x_m,
\end{equation}
and further (due to Eq. (\ref{eqn:recursion1}))
\begin{equation}\label{eqn:need2}
x_{m+2}\leq x_{m+1}.
\end{equation}

For any $1\leq c<3+2\sqrt{2}$ the characteristic function has two complex roots and there exists $m$ such that Eq. (\ref{eqn:need}) is satisfied. Therefore we can use $c$ arbitrarily close to $3+2\sqrt{2}$ to construct the sequence of tight jobs with processing length $x_0,x_1,\ldots,x_{m+1}$, for which the best competitive ratio ever achievable is $1/c$. Taking the limit of $c\to3+2\sqrt{2}$ yields the conclusion that the best competitive ratio ever achievable is $(3+2\sqrt{2})^{-1}=3-2\sqrt{2}$, matching the competitive ratio of DSC.

\section{Conclusion}

We consider the problem of online preemptive job scheduling
with deadlines and commitment requirement for the application of PHEV garage charging scheduling. We propose an online scheduling algorithm DSC
and analyze its competitive ratio in the presence of overload.
We show that the competitive ratio of DSC is $3-2\sqrt{2}=17.16\%$.
We also show that no online scheduling
algorithm can achieve a better competitive ratio, which establishes the optimality of DSC competitive-ratio-wise. Comparing with the optimal competitive ratio of $1/4=25\%$ without the commitment requirement, our result quantifies the performance loss (in terms of competitive ratio) due to the commitment obligation to be $7.84\%$.
The multi-processor scheduling and the average performance of the DSC algorithm under a stochastic setup will be investigated in future work.

\bibliographystyle{IEEEtran}
{\footnotesize\bibliography{Allerton'11}}

\end{document}